\newtheorem{theorem}{Theorem}
\newtheorem{lemma}{Lemma}
\newtheorem{corollary}{Corollary}
\newtheorem{remark}{\bf Remark}
\def\E{\mathsf{E}}
\def\phi{\varphi}
\def\l{\left}
\def\r{\right}
\def\({\left(}
\def\){\right)}
\def\b0{{\mathbf{0}}}
\newcommand{\nn}{\nonumber}
\begin{document}

%

\title{\huge The Connectivity of Millimeter-Wave Networks in Urban Environments Modeled Using Random Lattices}

\author{Kaifeng Han, {\em{Student Member, IEEE}}, Ying Cui, {\em{Member, IEEE}} , Yueping Wu, {\em{Member, IEEE}}, and Kaibin Huang, {\em{Senior Member, IEEE}}

\thanks{
This work of K. Han and K. Huang was supported by Hong Kong Research Grants Council under the Grants 17209917, 17259416, F-HKU703/15T. The work of Y. Cui was supported by National Science Foundation of China under Grant 61401272 and Grant 61521062 as well as Shanghai Key Laboratory Funding STCSM15DZ2270400. Part of this work has been presented in IEEE Globecom 2017. The corresponding author is K. Huang.}

\thanks{ K. Han and K. Huang are with the Dept. of EEE at The  University of  Hong Kong, Hong Kong (e-mail: \{kfhan, haungkb\}@eee.hku.hk). Y. Cui is with the Department of EE, Shanghai Jiao Tong University, Shanghai 200240, China  (e-mail: cuiying@sjtu.edu.cn). Y. Wu is with Hong Kong Applied Science and Technology Research Institute (ASTRI), Hong Kong (e-mail: wu.yueping@gmail.com). }}
\maketitle

\begin{abstract}
\emph{Millimeter-wave} (mmWave) communication opens up tens of \emph{giga-hertz} (GHz) spectrum in the mmWave band for use by next-generation wireless systems, thereby solving the problem of spectrum scarcity. Maintaining connectivity stands out to be a key design challenge for mmWave networks deployed in urban regions due to the blockage effect characterising mmWave propagation.  In this paper, we set out to investigate the blockage effect on the connectivity of mmWave networks in a Manhattan-type urban region modeled using a random regular lattice while \emph{base stations} (BSs) are Poisson distributed in the plane. In particular, we analyse the connectivity probability that a typical user is within the transmission range of a BS and connected by a line-of-sight. First, we consider a single-tier network.  By jointly applying the random lattice and stochastic geometry theories, a lower bound on the connectivity probability is derived as a function of building parameters (e.g., size and site occupancy probability) and BS parameters (e.g., transmission range and BS density). For the case of dense buildings, the bound is derived in a simpler form. Next, the preceding lower bounds are tightened based on the geometric technique of partitioning the irregular blockage-free region around the typical user. Moreover, the analysis is generalized to mmWave channels with both LoS and NLoS paths. Last, the results are extended to a $K$-tier \emph{heterogeneous network} (HetNet), where building heights are random, and depending on its height, a building can block the signals transmitted by a subset of BS tiers but not all. The analysis shows that the connectivity probability of the $K$-tier HetNet increases linearly with the number of tiers. In general, our work quantifies the relation between the coverage of a mmWave network and the parameters of building and BS processes, providing useful guidelines for deploying practical networks in a Manhattan-type region.
\end{abstract}
\begin{IEEEkeywords}
Millimeter-wave networks, radio access networks, network connectivity, random lattice, stochastic geometry, wireless propagation.
\end{IEEEkeywords}

\section{Introduction}
Exploiting  the tens-of-GHz of available bandwidth in the \emph{millimeter-wave} (mmWave) band is embraced by both the industry and academia as a key solution for spectrum scarcity faced by 5G in view of the exponential traffic growth \cite{rappaport2013millimeter}. Consequently, mmWave communications are expected to play a key role in delivering extreme broadband access to ultra-dense mobile users in next-generation systems \cite{andrews2014will}. Though the physics of mmWave propagation is not yet fully understood, measurement results show that the main characteristic  of a mmWave channel is that signals are blocked (or at least severely attenuated) by objects  in an urban environment (e.g., buildings), known as the \emph{blockage effect}, which is much less severe in the microwave band below $6$-GHz \cite{anderson2004building, alejos2008measurement}.  In next-generation cellular networks, dense small-cell \emph{base stations} (BSs) will be deployed at flexible locations in-between or inside buildings, unlike macro-cell BSs installed typically on rooftops \cite{andrews2014will}. Then to operate the networks in the mmWave band, maintaining reliable connectivity for mobile users appears to be a key design challenge due to the blockage effect. Therefore, from the perspective of implementing  mmWave radio access networks, referred to simply as mmWave networks in this paper, it is important to quantify network connectivity based on a reasonable propagation model for the urban environment. To this end, we adopt the classic Manhattan-type  urban model where the spatial distributions of buildings follow  a random regular lattice and the stochastic geometry network model where BSs are randomly deployed following a \emph{Poisson point process} (PPP) for the case of a single-tier network or $K$-tier PPPs for the case of  a $K$-tier \emph{heterogeneous network} (HetNet). The level of connectivity in a mmWave network is measured using  the metric of   \emph{connectivity  probability}, defined   as the probability that a typical mobile has a \emph{line-of-sight} (LoS) link with at least one BS within a given transmission range. Then based on the said model, the connectivity probability is analyzed as a function of the  building density as well as other network parameters.

\subsection{Modeling Propagation in Millimeter-Wave  Networks}

Theoretic studies of traditional cellular networks are commonly based on the abstracted probabilistic channel models, e.g., Rayleigh or Rician fading, which are proposed based on rich scattering resulting from reflection and refraction properties for propagation with frequencies far below those of mmWaves  \cite{goldsmith2005wireless}. Such models are unsuitable for mmWave propagation where due to the said blockage effect, channels are dominated  by LoS links \cite{Andrews:2016aa}. Consequently,  to study the performance of mmWave networks, it is necessary to adopt propagation models that reflect the geometry and layout of blockage and scatter objects in the environment. Several well-known relevant modeling approaches are described as follows.

\begin{figure*}[t]
\centering
\includegraphics[width=13cm]{./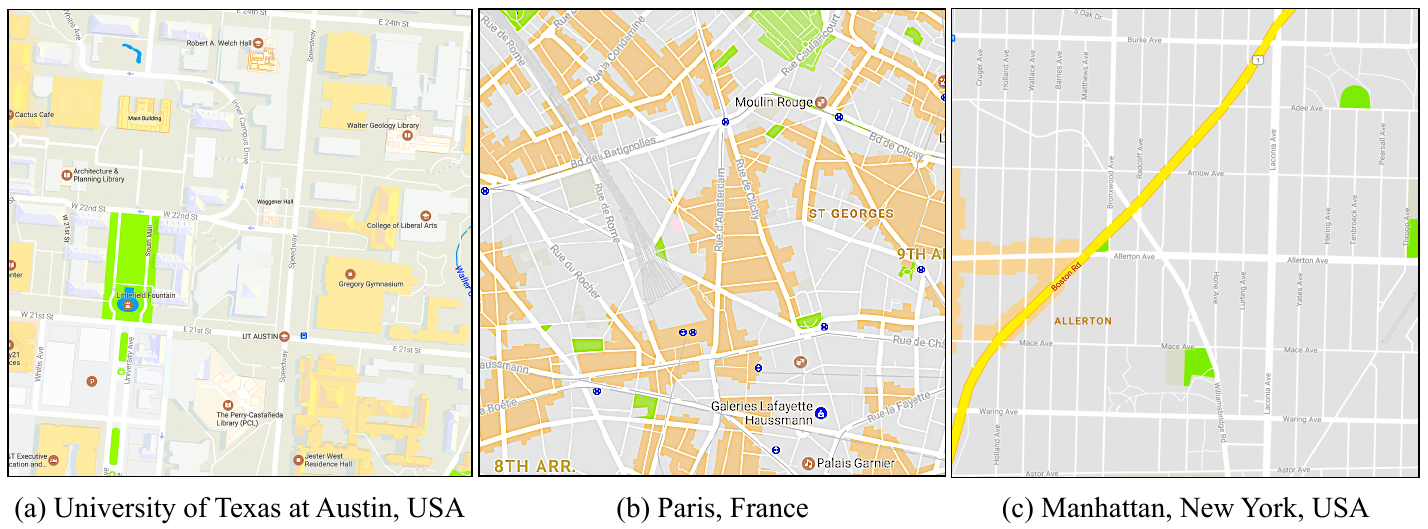}
\caption{Examples of urban regions: (a) University of Texas at Austin, USA,  (b) Paris, France and (c) Manhattan, New York, USA  that can be approximated  using the random-shape, Poisson-line  and random-lattice models, respectively.}\label{blockage_model_photo}
\label{Fig:Urban}
\end{figure*}

\subsubsection{Measurement Based  Models}
Perhaps the most accurate approach for studying  mmWave propagation is  \emph{ray tracing} that traces signal paths by simulation using a measurement based model accounting for  geometric properties  of objects in the propagation environment (e.g., locations, sizes, heights and orientations of buildings) or even their physical characteristics (e.g., surface materials) as well as atmospherical conditions \cite{schaubach1992ray,rizk1997two}. Though being a powerful tool for practical system design, ray tracing techniques do not yield mathematical tractability due to their high  complexity  and hence find little use in performance analysis of mmWave networks.

A simple measurement based model that can account for building blockage in mmWave propagation is one from the \emph{3GPP standard} where a random link belongs to either of the LoS or non-LoS types with given probabilities \cite{3gpptr}. The LoS and non-LoS probabilities can be fitted to a specific site  by measurement, e.g., the New York city \cite{akdeniz2014millimeter, sun2015path}. The distribution was found to be frequency independent for all bands up to $100$ GHz \cite{5gchannel},  making the model mmWave compatible.  Though being simple, the binary  channel-type  model is too coarse for depicting the detailed building layout needed for studying large-scale networks.

\subsubsection{Stochastic Geometry  Models}

Recently, modeling an urban region using stochastic geometry (e.g., Boolean model  or Poisson line process)   has emerged to be a promising approach for analyzing and designing large-scale mmWave networks for two main reasons \cite{bai2014analysis, baccelli2015correlated}. First, stochastic geometry provides a sufficiently elaborate and reasonable description of the random spatial distribution of blockage objects, accounting for their densities and geometric properties (e.g., shapes, orientations and sizes). Second, such an urban model can  be superimposed with a classic stochastic geometry network model to yield a tractable  model for a mmWave network.   The application of stochastic geometry to model and design wireless networks is a well-established approach (see e.g., the survey in \cite{haenggi2009stochastic_blockage}). Specifically, the spatial distribution of network nodes can be  represented by a suitable choice from a wide range of spatial point processes such as Poisson spatial tessellations  for cellular networks \cite{28andrews2011tractable} or cluster point processes for ad hoc radios \cite{GantiHaenggi:OutageClusteredMANET:2009, KB1}. The superposition of stochastic geometry models for blockage objects and  network nodes (BSs and mobiles) in an urban region allows the application of stochastic geometry theory to quantify the effects of blockage on the performance of mmWave networks.

There exist two main  stochastic geometry urban models  in the performance analysis of mmWave networks. The first one is the \emph{Boolean  model} considered in  \cite{bai2014analysis} where blockage objects are rectangular distributed in a plane following the Boolean model and having  random sizes and orientations. The model is suitable for the type of urban regions similar to the campus of The University of Texas at Austin, USA as shown in Fig.~\ref{Fig:Urban}(a), characterized by random and irregular blockage distributions. Combining the model and a network model with Poisson distributed BSs, random shape theory and other stochastic-geometry tools are applied for characterising the mmWave-network coverage by analysing the signal-and-interference distributions. For tractable analysis, it is assumed in \cite{bai2014analysis}  that different channels are independent of each other. The assumption does not hold in practice since an object with a large  volume can block multiple  mmWave links simultaneously with nonzero probability, introducing \emph{correlated blockage} for nearby links. Most recently, the blockage spatial correlation is studied in \cite{Gupta2017macro} using the Boolean model where buildings are represented by random line segments, and furthermore how blockage affects the network¡¯s reliability is characterised, considering users' macro diversity gain.

The second urban model, namely \emph{Poisson-line model},  is applied  in \cite{baccelli2015correlated} for studying wireless  networks with shadowing (or blockage) in an urban environment, where the distribution of streets in the urban region can be represented by a Poisson-line process. Such a model is suitable for cities with randomly oriented streets such as Paris, France as illustrated in Fig.~\ref{Fig:Urban}(b). In the mmWave-network model building on the said urban model,  network nodes  located on and off the lines are considered as being  outdoor and indoor, respectively.  Two outdoor locations along  the same line are connected by a LoS or otherwise by a non-LoS (with blockage), thereby accounting for correlated blockage.  Based on the model, the spatial correlation of channel shadowing is analyzed, providing results for studying the  interference distribution and network coverage \cite{baccelli2015correlated}.

In addition,  there exists a simplified model, called \emph{LoS-ball model},  for blockage effect as developed in \cite{singh2015tractable} for use in stochastic geometry network models. In the blockage model, a link between a BS-mobile pair is assumed to have a LoS only if the separation distance is shorter then a given threshold. Compared with the random-shape and Poisson-line models, the current one substantially simplifies the network-performance analysis but at the cost of losing an elaborate geometric description of blockage objects.

\subsubsection{Random-Lattice Models}

The urban regions in some cities  are suitably modeled using random (square) lattices but not the previously discussed random-shape or Poisson-line processes. One example is Manhattan, New York, USA as illustrated in Fig.~\ref{Fig:Urban}(c), giving the random-lattice urban models the well-known  name of Manhattan-type models. In such models, each square cell of the lattice is occupied by an object (e.g., a building) with a given probability and the locations of different cells are \emph{independent and identically distributed} (i.i.d.).  In the area of wave propagation, the models have been widely  used for  representing scatterers in an urban region  and used to analyze the distributions of wave-propagation distances in a given direction for a fixed number of reflections \cite{franceschetti1999propagation_blockage, marano1999statistical_blockage, marano2005ray_blockage}. Moreover, the random-lattice model is also commonly applied in network modelling and performance analysis in communication. In \cite[Part II]{martin2012SGWCom}, a comprehensive study of the percolation as well the connectivity in random-lattice model is provided. Based on the Manhanttan grid mobility model, the performance of the routing protocols in a mobile ad hoc network is investigated in \cite{jay2008grid}. In \cite{Choi2017spatial}, the random-lattice process is leveraged to model the deployment of BSs to study the coverage probability in a spatially repulsive cellular network. However, the applications of random-lattices to  blockage modelling still remain an uncharted area due to the lack of a tractable approach for network-performance analysis. This motivates the current work on making pioneering contributions to the area.

\subsection{Analysing Network Connectivity}

The connectivity of spatially random ad hoc networks is a classic research area (see e.g., \cite{haenggi2009stochastic_blockage, mao2012towards, xue2004number}). Two radio nodes are connected if they are within each other's transmission range. Then an ad hoc network is connected if all nodes have a single connected cluster with a probability close to one \cite{gupta1999critical}. In a fully connected ad hoc network, any pair of nodes can communicate via multi-hop transmissions.  Network connectivity typically exhibits a  phase-transition phenomenon where the network is fully connected almost surely if a particular network parameter, e.g., node transmission power \cite{gupta2000capacity} or interference density \cite{dousse2005impact}, satisfies some requirements. Different network models and analytical approaches, such as stochastic-geometric model and theory \cite{mao2013connectivity} or random matrix theory  \cite{dasgupta2015new}, have been proposed for studying network connectivity. In addition, various aspects of network connectivity in ad hoc networks have been investigated in the existing literature \cite{Bettstetter2002Degree, Bettstetter2004adhocNet, Coon2015scalinglaw, Pete2016mobility}. $K$-connectivity, referring to the event that a node is connected with $K$ neighbours, is studied in \cite{Bettstetter2002Degree} for wireless multi-hop networks. Connectivity properties for ad hoc networks are studied in \cite{Bettstetter2004adhocNet} via the development of an analytical framework. Considering both local and global connectivities, the corresponding network-scaling laws are derived in \cite{Coon2015scalinglaw} for bounded networks so as to obtain guidelines useful for controlling network topologies. How mobility affects network connectivity is studied in \cite{Pete2016mobility} by applying the well-known random waypoint model for mobile nodes. Due to  the severe  blockage effect, maintaining connectivity has appeared to be a key challenge for designing next-generation mmWave networks, and is an area largely uncharted,  motivating the current work. A classic network model for studying connectivity essentially consists of a set of randomly distributed nodes. In contrary, we consider mmWave radio access networks and propose a more complex stochastic geometry model comprising buildings (blockage objects) distributed as a random lattice and multi-tier BSs distributed as independent PPPs. Moreover, the connectivity for the mmWave networks is defined such that a typical user is connected to the network if the user lies within the transmission range of at least one BS and they are connected by a LoS. The definition differs from the mentioned classic one for ad hoc networks.

\subsection{Our Contributions}
The models and network performance metric for this work are elaborated as follows. The mmWave network is assumed to be deployed in a Manhattan-type urban region. As mentioned,  the existing spatial blockage models, including the Boolean model \cite{bai2014analysis} and the Poisson-line model \cite{baccelli2015correlated}, are unsuitable. Thus the buildings are modeled using a random regular lattice, where the plane is partitioned into uniform square sites with the size representing the building size. The buildings are overlaid with Poisson distributed BSs. We consider both single-tier and multi-tier BSs, modeled as a single PPP and multiple independent PPPs, respectively. The analysis focuses on a typical outdoor user at the origin while the extension of the results to a randomly located user is discussed. Then the network performance is measured by the \emph{connectivity probability} which is defined as the probability that the typical user is within the transmission range of at least one BS and furthermore they are connected by a LoS (unblocked by buildings). Interference is omitted in the analysis for simplicity, which can be justified by the fact that blockage and directional beamforming enabled by mmWave jointly suppress interference and enable the operation of mmWave networks in the noise-limited (i.e., power-limited)\footnote{In next-generation ultra-dense mmWave networks, a user can be exposed to a few unblocked strong interferers. The current analysis on the distribution of unblocked BSs for the typical user can be extended to analyze unblocked interferers, thereby investigating  interference in network connectivity analysis.} regime \cite{andrews2014will, RAP20131, Andrews:2016aa, rappaport2013millimeter}. One can see that the connectivity probability depends on both the distributions of the buildings and BSs. By jointly applying the theories of random lattice and stochastic geometry, bounds on the connectivity probability are derived in closed form.

The main contributions of the current work are summarised as follows.
\begin{enumerate}
  \item First, we consider a single-tier network. Define the blockage-free region as the region around the typical outdoor user that is free of buildings. A closed-form lower bound on the connectivity probability can be derived by inner bounding the blockage-free region by a disk centered at the user with a random radius. Using the bound and Poisson distribution of BSs, the lower bound on the connectivity probability is derived as a function of building parameters (e.g., size and site-occupancy probability) and BS parameters (e.g., transmission range and BS density). For dense sites, an asymptotic lower bound on the connectivity probability is derived, which has a simple form than the non-asymptotic counterpart.
  \item Next, the preceding lower bounds for the single-tier network are tightened by finding a tighter inner bound of the said blockage-free region. The technique is to partition the region into multiple sub-regions and inner bound each by a sector. The union of the sectors gives the said tighter inner bound on the blockage-free region. Then tighter bounds of the connectivity probability are derived. Furthermore, the analysis is generalized to mmWave channels with both LoS and non-LoS (NLoS) paths.
  \item Last, the preceding results are extended to a $K$-tier HetNet, comprising $K$-tier BSs with varying transmission ranges and densities over the tiers. Buildings with varying heights are also considered. The building with a random height blocks the signals transmitted by a corresponding subset of BS tiers. It is shown that the connectivity probability of the $K$-tier HetNet increases linearly with the number of tiers when the connectivity probability for each tier is small.
\end{enumerate}


\section{Models and Metric}\label{system_model}

The models and network performance metric are described in the following sub-sections. The notation is summarised in Table~\ref{tab:para_system}.
\begin{table*}[t]

\caption{Summary of Notation}\label{tab:para_system}
\begin{center}
\vspace{-2mm}
\begin{tabular}{|c!{\vrule width 1.5pt}l|}
\hline
Notation&Meaning\rule{0pt}{3mm}\\
\hhline{|=|=|}
$s$& Area of site\\
\hline
$\lambda_s$& The density of sites\\
\hline
$(a\sqrt{s}, b\sqrt{s})$, $\mathcal{S}_{a,b}$& The coordinate of site, the $(a,b)$th site\\
\hline
$\Phi$, $\tilde{\Phi}$& The random lattice process of a uniform-height blockage model, a $K$-height blockage model\\
\hline
$p_b$, $p_b^{(k)}$& Site occupancy probability in a uniform-height blockage model, a $K$-height blockage model\\
\hline
$\Sigma$, $\Sigma^{(k)}$& The random region covered by buildings in a uniform-height blockage model, a $K$-height blockage model\\
\hline
$\Pi$, $\lambda_c$& The PPP modeling BSs in the single-tier network, its density\\
\hline
$\Pi^{(k)}$, $\lambda_c^{(k)}$& The PPP modeling BSs in the $k$th tier network, its density\\
\hline
$r_b$, $r_b^{(k)}$& Radius of BS's coverage region in the single-tier, the $k$th tier network\\
\hline
$U_0$& Central located typical user\\
\hline
$\mathcal{F}$, $\mathcal{F}^{(k)}$& Blockage-free region of typical user a uniform-height blockage model, a $K$-height blockage model\\
\hline
$\mathcal{B}$& Disk region\\
\hline
$N(\cdot)$& Number of sites (either fully or partially) covered by a disk $\mathcal{B}$\\
\hline
$p_c$, $p_c^{(k)}$, $\widehat{p}_c$& Connectivity probability for the single-tier, the $k$th tier network, the $K$-tier HetNet\\
\hline

\end{tabular}
\vspace{-6mm}
\end{center}
\end{table*}

\begin{figure}[t]
\centering
\subfigure[Single-tier Network]{\includegraphics[width=8.5cm]{./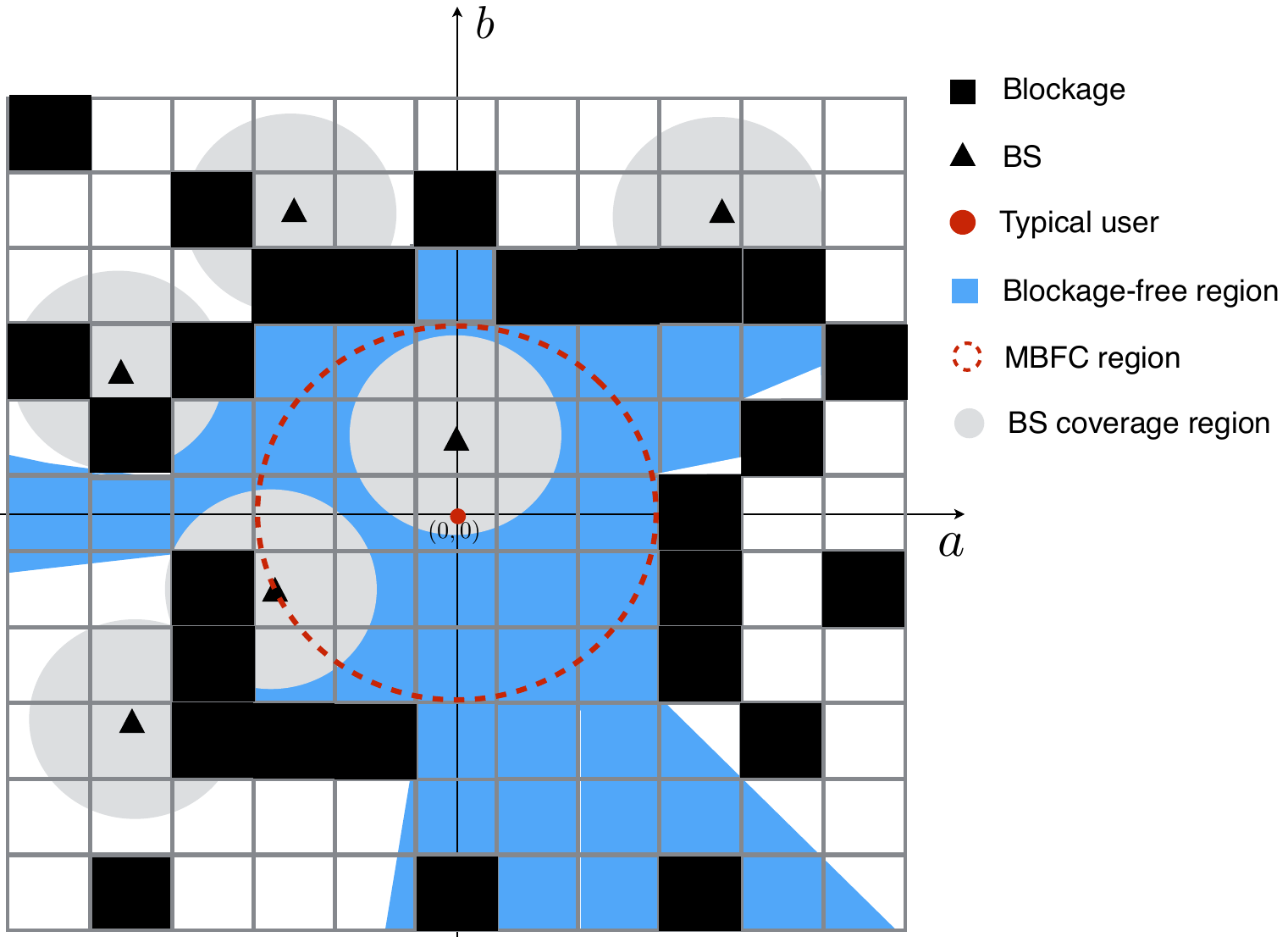}}
\hspace{20mm}
\subfigure[Two-tier Network]{\includegraphics[width=8.5cm]{./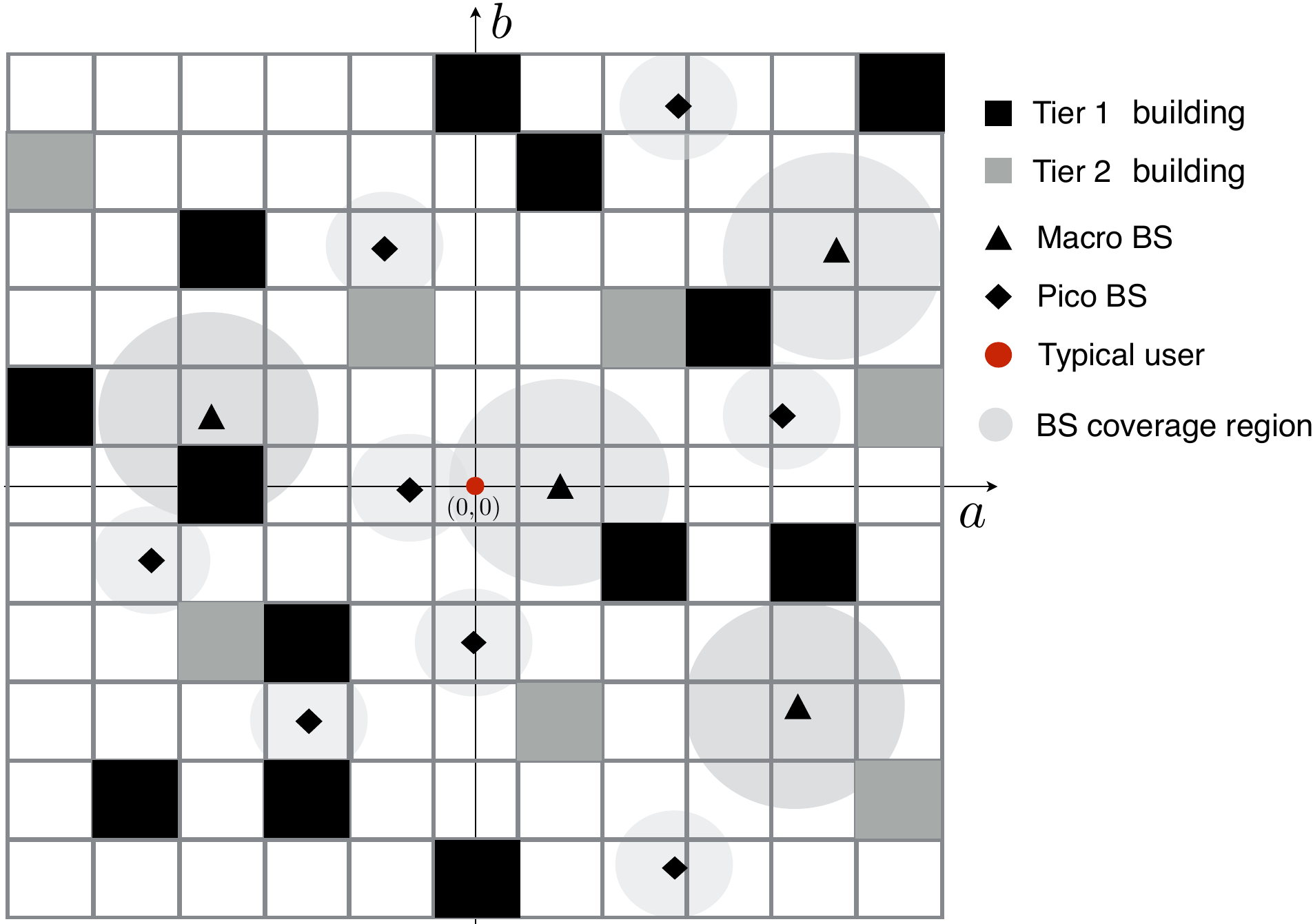}}
\caption{The spatial distribution of buildings, BSs, typical user in mmWave networks.}\label{sysmodel}
\end{figure}

\subsection{Blockage Spatial Distribution}

We model the mmWave blockage objects, namely buildings, in a Manhattan-type urban region using a random lattice process defined as follows. As illustrated in Fig.~\ref{sysmodel}, consider a regular lattice with density $\lambda_s$ that partitions the plane into uniform square areas of size $s = \frac{1}{\lambda_s}$, each called a site. For ease of expression, the center of each site is referred to as a lattice point. Let an arbitrary lattice point be the origin of plane $\mathds{R}^2$. Then the lattice points can be written as the set $\{(a\sqrt{s}, b\sqrt{s}) | (a,b) \in \mathds{Z}^2 \}$ and the $(a,b)$th site as $\mathcal{S}_{a,b}$, where $\mathds{Z}$ denotes the set of integers.

First, consider a uniform-height blockage model. That is, building heights are uniform. The random lattice, denoted as $\Phi$, represents a set of i.i.d. Bernoulli random variables $\Phi = \{ \mathcal{Z}_{a,b} \in\{0,1\}| (a,b) \in \mathds{Z}^2 \}$, where $\mathcal{Z}_{a,b} = 1$ and $0$ with probabilities $p_b$ and $\bar{p}_b = (1-p_b)$, respectively. Here, $\mathcal{Z}_{a,b}=1$ indicates that the site $\mathcal{S}_{a,b}$ is occupied by a building, and $\mathcal{Z}_{a,b}=0$ otherwise. The random region of plane $\mathds{R}^2$ covered by buildings is the  union of the  regions  of the sites in  $\{(a,b) \in \mathds{Z}^2 | \mathcal{Z}_{a,b}=1\}$, represented by $\Sigma$.

The above model can be extended to a $K$-height blockage model with densities varying for buildings with different random heights.  This model is more practical, as buildings in an urban region usually have different densities and heights. The $K$ different building heights are denoted by  $h^{(k)},\ k=1,\cdots, K$. Suppose  $h^{(1)}>\cdots>h^{(K)}>0$. For ease of notation, set $h^{(0)}=0$.    The effect of building heights on blockage is reflected in that buildings of heights $h^{(1)},\cdots, h^{(k)}$ can block signals transmitted by the $k$th tier BSs, as illustrated in the sequel. The random lattice of the $K$-height blockage model, denoted as $\tilde{\Phi}$, represents a set of i.i.d. random variables $\tilde{\Phi} = \{ \mathcal{Z}_{a,b} \in\{h^{(0)}, h^{(1)},\cdots, h^{(K)}\}| (a,b) \in \mathds{Z}^2 \}$, where  $\mathcal{Z}_{a,b} = h^{(k)} $ with probability  $p_b^{(k)}$, for all $k=1,\cdots, K$. Here, $\mathcal{Z}_{a,b}=h^{(0)}=0$ indicates that the site $\mathcal{S}_{a,b}$ is not occupied by a building, and $\mathcal{Z}_{a,b}=h^{(k)}>0$ indicates that $\mathcal{S}_{a,b}$ is occupied by a building of height $h^{(k)}$.  In addition, the probabilities $p_b^{(0)}$ and $p_b^{(k)}$ determine the density of empty sites and that of  buildings of height $h^{(k)}$. The random building region   corresponding to buildings of height $h^{(k)}$ is the union of the  regions  of the sites in  $\{(a,b) \in \mathds{Z}^2 | \mathcal{Z}_{a,b}=h^{(k)}\}$, represented by   $\Sigma^{(k)}$. Note that $\{\Sigma^{(k)}\}$ are correlated.


\subsection{Network Spatial Distribution}
First, consider a single-tier network comprising homogeneous BSs, as shown in Fig.~\ref{sysmodel}(a). The BS locations are modeled as a homogeneous PPP $\Pi = \{Y\}$ with density $\lambda_c$, where $Y \in \mathds{R}^2$ corresponds to the location of a particular BS. The BSs (or users) located in occupied sites can be treated as the \emph{indoor} BSs (or users) while others as the \emph{outdoor} BSs (or users). The network performance analysis focuses on a typical outdoor user, denoted by $U_0$, located at the origin and the extension to a randomly located user is subsequently discussed. The connectivity analysis for a typical indoor user is straightforward. Note that an indoor user cannot be connected with any BS outside its hosting building. Thus the corresponding connectivity probability is simply the probability that there exists at least one BS within the building, i.e., $1 - e^{-\lambda_c s}$. Therefore, we assume that there is no building at the origin, i.e., $\mathcal{Z}_{0,0} = 0$, and focus on the more complex analysis on the connectivity for a typical outdoor user.

Next, consider a HetNet comprising $K$ tiers of BSs as shown in Fig.~\ref{sysmodel}(b). The spatial distribution of BSs in the $k$th tier follows a homogeneous PPP with density $\lambda_c^{(k)}$, denoted as $\Pi^{(k)}$, where $k=1,\cdots, K$. Suppose $\lambda_c^{(1)}<\lambda_c^{(2)}<\cdots<\lambda_c^{(K)}$. The PPPs $\{\Pi^{(k)}\}$ are independent. All BSs in the same tier have the same transmission power. Suppose  the tier with a smaller index has a higher transmission power. That is, namely the 1st tier BSs have the largest transmission power while BSs in the $K$th tier have the smallest one. The BSs with a  higher transmit power (e.g., macro BSs) have a larger transmission range  and are overlaid by different classes of denser yet smaller coverage BSs (e.g., pico BSs or femto BSs). We consider the open-access strategy where any mobile user is allowed to connect to any BS tier without any restriction.\footnote{It is straightforward to extend the current analysis to the case of mixed open/closed-access. In this case, a BS within the connectivity range of the typical user can offer access to the user with a given probability or otherwise deny access, resulting in the thinning of the process of connectable BSs. The corresponding modification of the analysis is straightforward without affecting the main results and key insight.}

\subsection{Channel Model}\label{sec:channel}
Following \cite{3gpptr}, the channel between a BS and a user is of either LoS or NLoS, depending on whether it is intercepted by a building. For the case of NLoS, the complete blockage of signals is assumed in the whole paper except for Section \ref{sec:fadingSing} and \ref{sec:fadingHetnet} for simplification, reflecting severe propagation loss from penetrating or scattering by buildings \cite{Andrews:2016aa, bai2014analysis}.  On the other hand, for the case of LoS, the channel is assumed to have path-loss but no small-scale fading, as according to measurements of mmWave channels,  LoS paths are so dominant over reflection paths that small-scale fading  has a minor and negligible impact on mmWave communications \cite{andrews2014will, RAP20131, rappaport2013millimeter}. To be specific,  the BS transmitted power $P$ with distance $r$ is attenuated by the factor $r^{-\alpha_{\text{los}}}$ where $\alpha_{\text{los}} > 2$ denotes the path-loss exponent for LoS propagation. Considering the unit noise variance,  the corresponding receive \emph{signal-to-noise ratio} (SNR) is defined as $P_r := P r^{-\alpha_{\text{los}}}$.\footnote{In practical mmWave communication systems, sharp beamforming is typically deployed for enhancing link reliability. However, the  performance can be degraded due to transmit-and-receive beam misalignment.  Accounting for receive power loss due to random beam misalignment, the receive SNR can be modified as $\mathcal{Q} P_r$, where $\mathcal{Q} \in (0, 1)$ is a random variable \cite{Andrews:2016aa}.} The large-scale effects of shadowing are ignored in our model \cite{bai2014analysis, Geng2009mmwave} yet can be considered by applying the method in \cite{singh2015tractable} with the cost of tractability. Assume that the directional beamforming technique \cite{Macro2015mmwave} is implemented between the typical user and its associated BS using large-scale arrays. So both the typical user and its associated BS have the perfect channel knowledge and the BS can adaptively adjust its antenna's direction (i.e., steering oritentation) to obtain the maximal directionality gain.  As a desirable bonus, interference is assumed to be suppressed by the directional beamforming and also blocked by buildings.

We assume that a user is connected to the network if its receive SNR is above a given threshold $\theta$, i.e., $P_r \geq \theta$. It is worth mentioning that the current work mainly focuses on investigating LoS propagation and effects of blockage. The analysis is extended in Sections \ref{sec:fadingSing} and \ref{sec:fadingHetnet} to the case where channel comprises both LoS and NLoS paths in a single-tier and a $K$-tier HetNet, respectively. Considering only LoS propagation, two conditions have to be satisfied if a user connects to the network. First, the separation distance between the user and a BS should be shorter than a constant $r_b := (\frac{P}{\theta})^{\frac{1}{\alpha_{\text{los}}}}$, called the \emph{coverage (service) range} of the BS. Second, there has to be a LoS channel between them.

\subsection{Performance Metric}

The network performance is measured by the metric of connectivity probability $p_c$  defined as the probability that the typical user $U_0$ is connected to the network, i.e.,\footnote{Considering the randomness in beam misalignment, the connectivity probability can be expressed as $p_c = \Pr\{\mathcal{Q} P_r \geq \theta \} = \int_{0}^{1} \Pr\l\{P_r \geq \frac{\theta}{q} ~ |~ \mathcal{Q} = q\r\} f_\mathcal{Q}(q) {\rm d} q$,
where the conditional probability  $\Pr\l\{P_r \geq \frac{\theta}{q} ~|~ \mathcal{Q} = q\r\}$ can be obtained by  following  the identical procedure as that for investigating \eqref{defPc}, and  the probability density function of $\mathcal{Q}$, i.e., $f_\mathcal{Q}(q)$, can be characterised  (see e.g., \cite{Andrews:2016aa}).}
\begin{align}\label{defPc}
p_c = {\rm Pr} \left\{ P_r \geq \theta  \right\}.
\end{align}
Considering only LoS propagation in the single-tier network,  recall that the typical user $U_0$ is connected to the single-tier network if it is in the service range of at least one BS such that the BS is linked with $U_0$ by a LoS. Mathematically, the connectivity probability can be rewritten as follows. Let $\mathcal{B}(r_b)$ denote the disk centered at $U_0$ and with the radius $r_b$, where $r_b$ represents the coverage range of a BS in the single-tier network. Let $L(A,B) \triangleq \{ cA + (1-c)B | A \in \mathds{R}^2, B \in \mathds{R}^2, 0\leq c \leq 1\}$ denote the line segment connecting two points $A \in \mathds{R}^2$ and $B \in \mathds{R}^2$ in the plane. Define the  random \emph{blockage-free region} in the single-tier network as all points in the plane that are connected to $U_0$ by LoS, denoted by $\mathcal{F} \triangleq \{X \in \mathds{R}^2| L(X, U_0) \cap \Sigma = \emptyset\}$. Then, considering only LoS propagation, the connectivity probability $p_c$ in \eqref{defPc}  is equivalent to
\begin{align}\label{def:pc}
p_c = {\rm Pr} \left\{ \mathcal{F} \cap \mathcal{B}(r_b) \cap \Pi \neq \emptyset  \right\}.
\end{align}
Note that $\mathcal{F}$ and $\Pi$ depend on the blockage and BS distributions, respectively.

Consider the $K$-tier HetNet. Let $r^{(k)}_b$ represent the coverage radius of BSs in the $k$th tier.  Recalling the assumption that the tier with a smaller index has a higher transmission power, we have $r^{(1)}_b>r^{(2)}_b>\cdots >r^{(K)}_b$. Recall that  signals from a BS in the $k$th tier can be blocked by buildings of heights $h^{(1)}, \cdots, h^{(k)}$. Similar to the single-tier network scenario, the random blockage-free region for the $k$th tier is given by $\mathcal{F}^{(k)} = \{X \in \mathds{R}^2| L(X, U_0) \cap  (\cup_{\ell=1}^k\Sigma^{(\ell)}) = \emptyset\}$. Therefore, considering only LoS propagation,  the connectivity probability for the $k$th tier network is given by
\begin{align}\label{def:pc_k}
p_c^{(k)} = {\rm Pr} \left\{ \mathcal{F}^{(k)} \cap \mathcal{B}(r^{(k)}_b) \cap \Pi^{(k)} \neq \emptyset  \right\}.
\end{align}
Moreover, recalling the used open-access strategy, the event that the typical user covered by the $K$-tier HetNet is equivalent to that the set $\bigcup_{k=1}^{K} \left( \mathcal{F}^{(k)} \cap \mathcal{B}(r^{(k)}_b) \cap \Pi^{(k)}\right)$ is not empty. Then, for  the $K$-tier HetNet, the connectivity probability  $\widehat{p}_c$ can be written as
\begin{align}
\widehat{p}_c = {\rm Pr}\left\{\bigcup_{k=1}^{K} \left( \mathcal{F}^{(k)} \cap \mathcal{B}(r^{(k)}_b) \cap \Pi^{(k)}\right)\neq \emptyset\right\}.\label{eqn:K-HetNet-coverprob}
\end{align}
Note that $\mathcal{F}^{(k)}$ and $\Pi^{(k)}$ depend on the distributions of buildings of heights $h^{(1)},\cdots, h^{(k)}$ and the distribution of BSs in the $k$th tier, respectively.


\section{Connectivity of Single-tier Network}\label{single-tier}

It is challenging to derive the network connectivity probability due to the irregularity of the proposed blockage-free region. In this section, we derive its lower bounds with simple and insightful forms for the two cases of finite and high site densities by applying random-lattice and stochastic-geometry theories.

\subsection{Bounding Connectivity Probabilities}

\subsubsection{Finite Site Density}
Consider the case in which the site density $\lambda_s$ is finite. The case is equivalent to the one where each site has a finite area. For this case of finite $\lambda_s$, considering a typical outdoor mobile located at the origin for simplicity,  a lower bound on the connectivity probability $p_c$ defined in (\ref{def:pc}) is derived in this subsection.

First of all, some useful results are derived as follows.
\begin{lemma}[Counting Sites]\label{lem:num_site}\emph{
Given $r>0$, the number of sites (either fully or partially) covered by the disk $\mathcal{B}(r)$, denoted as $N(r)$,  is given as follows.
\begin{itemize}
\item[--]  For a finite ratio $\frac{r}{\sqrt{s}}$, $N(r)$ satisfies $N^{-}(r) \leq N(r) \leq N^{+}(r)$ with
\begin{align}
&N^{-}(r) = \l( 2\Big\lceil \frac{r}{\sqrt{2s}}-\frac{1}{2} \Big \rceil^{+} +1 \r)^2, \label{num_site_sr1}\\
&N^{+}(r) = \l( 2\Big\lceil \frac{r}{\sqrt{s}}-\frac{1}{2} \Big \rceil^{+} +1 \r)^2, \label{num_site_sr2}
\end{align}
where the operator $\lceil x \rceil^{+} = \max(\lceil x \rceil,0)$.
\item[--] For a large ratio $ \frac{r}{\sqrt{s}} \gg 1 $, $N(r)$ is given by
\begin{align}
N(r) = \frac{\pi r^2}{s} + O\l(\frac{r}{\sqrt{s}}\r). \label{num_site_lr}
\end{align}
\end{itemize}
}
\end{lemma}
Note that the lower and upper bounds on $N(r)$ given in \eqref{num_site_sr1} and \eqref{num_site_sr2} are derived by calculating the largest number of sites fully covered by $\mathcal{B}(r)$ and the smallest number of sites  fully covering $\mathcal{B}(r)$, respectively (see Fig.~\ref{sysmodel}(a)). In addition, the value of $N(r)$ in \eqref{num_site_lr} is obtained by ignoring boundary effects and focusing on the ratio between the area of $\mathcal{B}(r)$ and that of a site only.

Next, to derive a simple and insightful lower bound on $p_c$, we introduce the \emph{maximum (inscribed) blockage-free circular} (MBFC) region. Specifically, let
\begin{align}
R \triangleq  \max & \quad  r\nonumber\\
\mathbf{s.t.}& \quad r \in \left\{\sqrt{s}(n+\frac{1}{2}) \Big| n=0,1,\cdots\right\},\nonumber\\
& \quad \mathcal{B}(r) \cap \Sigma = \emptyset.
\end{align}
Note that $R \in \{\sqrt{s}(n+\frac{1}{2})| n=0,1,\cdots\}$ is a discrete random variable with the randomness included by the blockage region $\Sigma$. The discreteness of $R$ is due to that of $\Sigma$. The MBFC region is centered at the origin with radius $R$, denoted by $\mathcal{B}(R)$. The distribution of $R$ is given as follows.
\begin{lemma}[Distribution  of  MBFC Region]\label{lem:area_dist}
\emph{The \emph{probability mass function} (PMF) of $R$ is given by}\footnote{Note that $N(r_{n})=4(n+\frac{1}{2})^2-1$.}
\begin{align}\label{eq:S_dist}
{\rm Pr}\left\{R=r_n \right\}=\bar{p}_{b}^{4(n+\frac{1}{2})^2-1}-\bar{p}_{b}^{4(n+\frac{3}{2})^2-1}, n = 0, 1, \cdots,
\end{align}
\emph{where $r_n\triangleq \sqrt{s}(n+\frac{1}{2})$.}
\end{lemma}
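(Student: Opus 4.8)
The plan is to pass to the complementary distribution function of $R$ and then take successive differences. The event $\{\mathcal{B}(r)\cap\Sigma=\emptyset\}$ is monotone decreasing in $r$: a disk that avoids every occupied site contains each smaller concentric disk, which therefore also avoids them. Hence, with $R$ the largest admissible radius $r_n$ for which $\mathcal{B}(r_n)\cap\Sigma=\emptyset$, we have for every integer $n\ge 0$
\begin{align}
\{R\ge r_n\}=\{\mathcal{B}(r_n)\cap\Sigma=\emptyset\},\nonumber
\end{align}
and therefore ${\rm Pr}(R=r_n)={\rm Pr}(R\ge r_n)-{\rm Pr}(R\ge r_{n+1})$. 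The problem thus reduces to computing ${\rm Pr}(\mathcal{B}(r_n)\cap\Sigma=\emptyset)$ for each $n$.

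The core step translates this geometric event into a statement about finitely many Bernoulli marks. Because $\Sigma$ is the union of the occupied sites, $\mathcal{B}(r_n)\cap\Sigma=\emptyset$ holds exactly when every site met by $\mathcal{B}(r_n)$ is unoccupied. Now $\mathcal{B}(r_n)$ is the disk of radius $r_n=(n+\tfrac{1}{2})\sqrt{s}$ centered at the lattice point $U_0$, i.e.\ at the center of $\mathcal{S}_{0,0}$, so it is contained in the centered square block $\bigcup\{\mathcal{S}_{a,b}:|a|\le n,\ |b|\le n\}$ of $N^{+}(r_n)=(2n+1)^2=4(n+\tfrac{1}{2})^2$ sites. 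Of these, the central one $\mathcal{S}_{0,0}$ is unoccupied by the standing assumption that $U_0$ is an outdoor user, hence can be discarded; the remaining $4(n+\tfrac{1}{2})^2-1$ sites -- the footnote's count $N(r_n)$ -- are, by the random-lattice model, i.i.d.\ and each unoccupied with probability $\bar{p}_b$, independently of $\mathcal{S}_{0,0}$. This gives
\begin{align}
{\rm Pr}(\mathcal{B}(r_n)\cap\Sigma=\emptyset)=\bar{p}_b^{\,4(n+1/2)^2-1},\nonumber
\end{align}
and substituting this, together with the same expression for $r_{n+1}$ (exponent $4(n+\tfrac{3}{2})^2-1$), into the telescoping identity produces the stated PMF. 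A quick check: the $n=0$ mass is $\bar{p}_b^{0}-\bar{p}_b^{8}$, every ${\rm Pr}(R\ge r_n)$ lies in $(0,1]$ and tends to $0$ as $n\to\infty$ when $p_b>0$, so the masses are nonnegative and sum to one.

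The step I expect to be the real obstacle is this site count -- identifying precisely which sites $\mathcal{B}(r_n)$ meets. One has to be careful with tangency: the corner site $\mathcal{S}_{n,n}$ of the enclosing block has its nearest point at distance $(n-\tfrac{1}{2})\sqrt{2s}$ from $U_0$, which exceeds $r_n=(n+\tfrac{1}{2})\sqrt{s}$ once $n\ge 3$, so the disk does not actually meet the corner sites and the exact number of sites met drops below $(2n+1)^2$. The PMF in the lemma is therefore most naturally read as corresponding to the conservative requirement that the entire enclosing block be vacant -- equivalently, to replacing $R$ by a stochastically smaller surrogate $\underline{R}\le R$, which is exactly what the ensuing lower bound on $p_c$ needs and which is exact for $n\le 2$. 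Once the count (or this surrogate) is settled, the remaining ingredients -- monotonicity in $r$, the i.i.d.\ product, the vacancy of the central site, and the telescoping -- are all routine.
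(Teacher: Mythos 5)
Your proof follows essentially the same route as the paper's: write ${\rm Pr}(R\ge r_n)$ as the probability that every site meeting $\mathcal{B}(r_n)$ is vacant, count $(2n+1)^2$ sites in the enclosing block, discard the central site (vacant because $U_0$ is outdoor), and telescope the survival function. Your added observation that for $n\ge 3$ the disk does not actually meet the corner sites of that block --- so the stated formula is really the law of a stochastically smaller surrogate of $R$, which still suffices for the ensuing lower bound on $p_c$ --- is a correct refinement of a point the paper's own proof glosses over when it asserts the count is exactly $\left(2\left(n+\frac{1}{2}\right)\right)^2$.
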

\begin{proof}
See Appendix~\ref{proof:area_dist}.
\end{proof}

Using the above results, the connectivity probability $p_c$ can be lower bounded as follows. One can see that the MBFC region inner bounds the blockage-free region for the considered typical user: $\mathcal{B}(R) \subseteq \mathcal{F}$. Then replacing $\mathcal{F}$ with $\mathcal{B}(R)$ in the definition of $p_c$ in (\ref{def:pc}) gives:
\begin{align}
p_c &\geq {\rm Pr}\Big\{ \mathcal{B}(R) \cap \mathcal{B}(r_b) \cap \Pi \neq \emptyset  \Big\} \nn \\
&= {\rm Pr}\Big\{ \mathcal{B}\Big(\min(R,r_b)\Big) \cap \Pi \neq \emptyset \Big\} \label{pc_rb_lc}\\
&= \E\Big[ 1 - e^{-\pi \lambda_c (\min(R,r_b))^2} \Big] \nn \\
&= \Big(1-e^{-\pi\lambda_c r_{b}^{2}}\Big){\rm Pr}\Big(R\ge r_{b}\Big) \nn \\ &\hspace{0mm}+\sum_{n=0}^{\lfloor\frac{r_b}{\sqrt{s}}-\frac{1}{2}\rfloor^{+}}\Big(1-e^{-\pi\lambda_c r_n^{2}}\Big){\rm Pr}(R= r_n),\label{eq:CP1_ttl_thm_v2}
\end{align}
where the operator $\lfloor x \rfloor^{+} = \max(\lfloor x \rfloor,0)$.
Note that the two terms  in \eqref{eq:CP1_ttl_thm_v2} correspond to the cases of $R\geq r_b$ and $R <r_b$, respectively. In particular, in the case of $R <r_b$, $R$ takes the values in $\{r_n| r_n<r_b\}$.
On the other hand, leveraging the result in Lemma~\ref{lem:num_site}, we have
\begin{align}
{\rm Pr}\left(R\ge r_{b}\right) &= (1-p_{b})^{N(r_{b})-1} \geq (1-p_{b})^{N^{+}(r_{b})-1}. \label{ub_prb}
\end{align}
Substituting the distribution of $R$ in Lemma~\ref{lem:area_dist} and (\ref{ub_prb}) into (\ref{eq:CP1_ttl_thm_v2}) gives the following main result of this sub-section.
\begin{theorem}[Connectivity Probability for the Single-tier Network]\label{thm:CP1}
\emph{The connectivity probability for the single-tier network can be lower bounded as follows:}
\begin{align}\label{eq:CP1_dis}
&p_{c}\geq\Big(1-e^{-\pi\lambda_c r_{b}^{2}}\Big)\bar{p}_{b}^{N^{+}(r_b)-1} \nn \\
&\hspace{0mm}+ \sum_{n=0}^{\lfloor\frac{r_b}{\sqrt{s}}-\frac{1}{2}\rfloor^{+}}\Big(1 -e^{-\pi\lambda_c s (n+\frac{1}{2})^2}\Big)\bar{p}_{b}^{4(n+\frac{1}{2})^2-1}\Big( 1 - \bar{p}_{b}^{8(n+1)} \Big),
\end{align}
\emph{where   $N^{+}(r_b)$ is given in \eqref{num_site_sr1} and \eqref{num_site_sr2}.}
\end{theorem}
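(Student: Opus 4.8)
\medskip
\noindent\textit{Plan of proof.} The plan is to follow the two-stage bounding strategy outlined just before the statement, supplying the geometric and probabilistic justifications. The first stage replaces the irregular region $\mathcal{F}$ by the inscribed disk $\mathcal{B}(R)$. For any $X\in\mathcal{B}(R)$, since $\mathcal{B}(R)$ is a disk centred at $U_0$ (the origin) we have $L(X,U_0)\subseteq\mathcal{B}(R)$, and $\mathcal{B}(R)\cap\Sigma=\emptyset$ by the very definition of the MBFC radius $R$; hence $L(X,U_0)\cap\Sigma=\emptyset$, i.e.\ $X\in\mathcal{F}$. Therefore $\mathcal{B}(R)\subseteq\mathcal{F}$, and substituting this into \eqref{def:pc} gives $p_c\ge{\rm Pr}\{\mathcal{B}(R)\cap\mathcal{B}(r_b)\cap\Pi\neq\emptyset\}$. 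Because the two disks are concentric, $\mathcal{B}(R)\cap\mathcal{B}(r_b)=\mathcal{B}(\min(R,r_b))$, so the right-hand side equals ${\rm Pr}\{\mathcal{B}(\min(R,r_b))\cap\Pi\neq\emptyset\}$, which is \eqref{pc_rb_lc}.

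Next I would condition on $R$. Since $R$ is a deterministic function of the blockage lattice $\Sigma$, it is independent of the BS process $\Pi$, so the void probability of a homogeneous PPP applies conditionally: ${\rm Pr}\{\mathcal{B}(\rho)\cap\Pi\neq\emptyset\mid R\}=1-e^{-\pi\lambda_c\rho^2}$ with $\rho=\min(R,r_b)$. Taking the expectation over $R$ yields $p_c\ge\E[\,1-e^{-\pi\lambda_c(\min(R,r_b))^2}\,]$. Splitting on the events $\{R\ge r_b\}$ and $\{R<r_b\}$ and invoking the PMF of $R$ from Lemma~\ref{lem:area_dist} — on the latter event $R$ takes the values $r_n$ with $r_n<r_b$, i.e.\ $n\le\lfloor r_b/\sqrt{s}-\tfrac12\rfloor^{+}$ — reproduces exactly \eqref{eq:CP1_ttl_thm_v2}.

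The last step turns \eqref{eq:CP1_ttl_thm_v2} into the closed form \eqref{eq:CP1_dis}. Here $\{R\ge r_b\}$ holds iff every site met by $\mathcal{B}(r_b)$ is empty; there are $N(r_b)$ such sites and, as $U_0$ is an outdoor user, the site containing the origin is already known to be empty, so ${\rm Pr}(R\ge r_b)=\bar{p}_b^{\,N(r_b)-1}$, which is \eqref{ub_prb}. Combining the upper bound $N(r_b)\le N^{+}(r_b)$ from Lemma~\ref{lem:num_site} with $0<\bar{p}_b<1$ gives ${\rm Pr}(R\ge r_b)\ge\bar{p}_b^{\,N^{+}(r_b)-1}$; since the coefficient $1-e^{-\pi\lambda_c r_b^2}$ multiplying this term in \eqref{eq:CP1_ttl_thm_v2} is nonnegative, the replacement keeps the inequality valid. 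Finally I would substitute $r_n^2=s(n+\tfrac12)^2$ and ${\rm Pr}(R=r_n)=\bar{p}_b^{\,4(n+\frac12)^2-1}-\bar{p}_b^{\,4(n+\frac32)^2-1}=\bar{p}_b^{\,4(n+\frac12)^2-1}\bigl(1-\bar{p}_b^{\,8(n+1)}\bigr)$, using $4(n+\tfrac32)^2-4(n+\tfrac12)^2=8(n+1)$, to arrive at \eqref{eq:CP1_dis}.

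The main obstacle is not any single computation but two points of care: (i) verifying that the crude replacement ${\rm Pr}(R\ge r_b)\mapsto\bar{p}_b^{\,N^{+}(r_b)-1}$ still leaves a genuine lower bound on $p_c$, which rests on the nonnegativity of the factor $1-e^{-\pi\lambda_c r_b^2}$ and on the monotonicity of $x\mapsto\bar{p}_b^{\,x}$; and (ii) the independence of $R$ and $\Pi$, which is exactly what licenses pulling the Poisson void probability through the conditional expectation. Everything else — the short reindexing of the sum and the algebraic simplification of the PMF — is routine, and the two nontrivial inputs (the site counts $N^{\pm}(r_b)$ and the distribution of $R$) are supplied by Lemmas~\ref{lem:num_site} and~\ref{lem:area_dist}.
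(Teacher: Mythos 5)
Your proposal is correct and follows essentially the same route as the paper: inner-bounding $\mathcal{F}$ by the MBFC disk $\mathcal{B}(R)$, reducing to $\mathcal{B}(\min(R,r_b))$, applying the PPP void probability, splitting on $\{R\ge r_b\}$ versus $\{R<r_b\}$, and then invoking Lemma~\ref{lem:num_site} and Lemma~\ref{lem:area_dist} with the algebraic identity $4(n+\tfrac32)^2-4(n+\tfrac12)^2=8(n+1)$. The additional justifications you supply (convexity of the disk for $\mathcal{B}(R)\subseteq\mathcal{F}$, independence of $R$ and $\Pi$, and the sign/monotonicity check for replacing $N(r_b)$ by $N^{+}(r_b)$) are exactly the points the paper leaves implicit.
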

For the lower bound on $p_c$ in Theorem~\ref{thm:CP1}, the two terms correspond to the cases of $R \geq r_b$ and $R < r_b$, respectively. Note that  $r_b>r_n$ for $n=0,1,\cdots, \lfloor\frac{r_b}{\sqrt{s}}-\frac{1}{2}\rfloor^{+}$ implying $N^+(r_b)>N(r_n)=4(n+\frac{1}{2})^2-1$. Thus, the first term is dominant if the buildings are sparse, i.e., $p_b$ is small (or $\bar p_b$ is large), and the second term is dominant if the buildings are dense, i.e., $p_b$ is large (or $\bar p_b$ is small). Next, one thing can be observed from the result that the key parameters that determine the connectivity probability are the site-void probability $\bar{p}_b$, the BS density $\lambda_c$ and the coverage radius $r_b$. To be specific, $p_c$ is a monotone-increasing function of the two BS parameters ($\lambda_c$ and $r_b$) and also $\bar{p}_b$ when $\bar{p}_b$ is small. In addition, $p_c$ approaches one  exponentially fast  as $\lambda_c$ grows.

\subsubsection{High Site Density}
Consider the case of dense  sites ($\lambda_s \rightarrow \infty$). In this case, the discreteness of the blockage region $\Sigma$ varnishes such that the radius of the MBFC region, $R$, can be approximated as a continuous random variable with the following distribution:
\begin{align}
{\rm Pr}\{R\ge r\} = \bar{p}_b^{\lambda_s \pi r^2 + O(\sqrt{\lambda_s})}, \qquad \lambda_s \rightarrow \infty,
\end{align}
based on Lemma~\ref{lem:num_site}. Thus, as $\lambda_s  \rightarrow \infty$, the lower bound on $p_c$ in (\ref{eq:CP1_ttl_thm_v2}) can be written as:
\begin{align}\label{eq:CP1_ttl_smls_v2}
p_{c}&\geq\left(1-e^{-\pi\lambda_c r_{b}^{2}}\right)\bar{p}_b^{\pi \lambda_s  r_b^2}  \nn \\ &\hspace{0mm}+2\pi\int_0^{r_{b}}\left(1-e^{-\pi\lambda_c r^{2}}\right)\lambda_{s}\ln\frac{1}{\bar{p}_{b}}\bar{p}_{b}^{\lambda_{s}\pi r^{2}} r {\rm d}r,\nn\\
&= \frac{\lambda_c \left(1-\bar{p}_{b}^{\lambda_{s}\pi r_{b}^{2}}e^{-\pi\lambda_c r_{b}^{2}}\right)}{\lambda_c-\lambda_{s}\ln\bar{p}_{b}} + \frac{\lambda_{s}\ln\bar{p}_{b}\left(1-e^{ - \pi\frac{\lambda_c}{\lambda_s} }\right)}{\lambda_{s}\ln\bar{p}_{b}-\lambda_c}.
\end{align}
Applying the result given in (\ref{eq:CP1_ttl_smls_v2}) yields the following main result of the sub-section.
\begin{theorem}[Connectivity Probability for the Single-tier Network with a High Site Density]\label{thm:CP1_cont}
\emph{For the case of dense sites ($\lambda_s  \rightarrow \infty$), the connectivity probability for the single-tier network is lower bounded as follows:
\begin{align}
p_c\geq&\frac{1-e^{-\lambda_c \pi r_{b}^2(1 + \frac{\lambda_s}{\lambda_c}\ln \frac{1}{\bar{p}_b })}}{1+\frac{\lambda_s}{\lambda_c}\ln \frac{1}{\bar{p}_b }}+\frac{1- e^{ - \pi\frac{\lambda_c}{\lambda_s} }}{1+\frac{\lambda_c}{ \lambda_s\ln\frac{1}{\bar{p}_{b}}}}. \label{eq:CP1_cont_smls}
\end{align}}
\end{theorem}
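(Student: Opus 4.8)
The plan is to obtain the claimed bound by taking the $\lambda_s\to\infty$ limit of the lower bound $p_c \ge \E[\,1-e^{-\pi\lambda_c(\min(R,r_b))^2}\,]$ established in the derivation preceding Theorem~\ref{thm:CP1}, which holds because the MBFC disk $\mathcal{B}(R)$ inner‑bounds the blockage‑free region $\mathcal{F}$ and $\Pr\{\mathcal{B}(\rho)\cap\Pi\neq\emptyset\}=1-e^{-\pi\lambda_c\rho^2}$ for a PPP of intensity $\lambda_c$. Everything then reduces to lower bounding this expectation as the site area $s=1/\lambda_s$ shrinks. The guiding idea is that as the lattice spacing $\sqrt s$ vanishes, the discrete variable $R$, supported on $\{\sqrt s(n+\tfrac12):n\ge 0\}$, is well approximated by a continuous random variable, so the sum in \eqref{eq:CP1_ttl_thm_v2} becomes an integral with a closed‑form value.

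The first step is to pin down the limiting law of $R$. By Lemma~\ref{lem:area_dist}, $\Pr(R\ge r_n)=\bar p_b^{\,N(r_n)-1}$, and by the large‑ratio estimate \eqref{num_site_lr} of Lemma~\ref{lem:num_site}, $N(r)=\pi\lambda_s r^2+O(\sqrt{\lambda_s}\,r)$; hence for fixed $r>0$, $\Pr(R\ge r)\to\bar p_b^{\,\pi\lambda_s r^2}$, so in the limit $R$ carries the density $f_R(r)=2\pi\lambda_s\ln\frac1{\bar p_b}\,r\,\bar p_b^{\,\pi\lambda_s r^2}$ on $(0,\infty)$. Plugging this into $\E[\,1-e^{-\pi\lambda_c(\min(R,r_b))^2}\,]$ and conditioning on $\{R\ge r_b\}$ versus $\{R<r_b\}$ gives a boundary term $(1-e^{-\pi\lambda_c r_b^2})\bar p_b^{\,\pi\lambda_s r_b^2}$, the integral $2\pi\int_0^{r_b}(1-e^{-\pi\lambda_c r^2})\lambda_s\ln\frac1{\bar p_b}\,\bar p_b^{\,\pi\lambda_s r^2}\,r\,{\rm d}r$ as in \eqref{eq:CP1_ttl_smls_v2}, and a residual contribution from the region $r\lesssim\sqrt s$, where $R$ is pinned above $r_0=\sqrt s/2$, which I would keep track of separately.

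The second step is the integral. With $\alpha=\pi\lambda_c$ and $\beta=\pi\lambda_s\ln\frac1{\bar p_b}$ one has $\bar p_b^{\,\pi\lambda_s r^2}=e^{-\beta r^2}$; substituting $u=r^2$ the integrand collapses to $\beta\bigl(e^{-\beta u}-e^{-(\alpha+\beta)u}\bigr)$, so the integral equals $(1-e^{-\beta r_b^2})-\frac{\beta}{\alpha+\beta}\bigl(1-e^{-(\alpha+\beta)r_b^2}\bigr)$. Adding the boundary term $e^{-\beta r_b^2}-e^{-(\alpha+\beta)r_b^2}$ and the near‑origin residual, then collecting the exponentials, gives the two‑summand closed form on the second line of \eqref{eq:CP1_ttl_smls_v2}. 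Re‑expressing the two prefactors through the dimensionless ratios $\frac{\lambda_s}{\lambda_c}\ln\frac1{\bar p_b}$ and $\frac{\lambda_c}{\lambda_s\ln(1/\bar p_b)}$ and using $\pi r_b^2(\lambda_c+\lambda_s\ln\frac1{\bar p_b})=\pi\lambda_c r_b^2\bigl(1+\frac{\lambda_s}{\lambda_c}\ln\frac1{\bar p_b}\bigr)$, one lands on \eqref{eq:CP1_cont_smls}.

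I expect the main obstacle to be the passage from the discrete to the continuous description of $R$: one must argue that replacing the PMF of Lemma~\ref{lem:area_dist} by its continuous counterpart still yields a valid (asymptotic) lower bound, and, more delicately, handle the neighbourhood of $r=0$ correctly, since $R$ cannot fall below $r_0=\sqrt s/2$ and there the $O(\sqrt{\lambda_s}\,r)$ correction in Lemma~\ref{lem:num_site} is comparable to the leading $\pi\lambda_s r^2$ term rather than negligible. It is this near‑origin bookkeeping — not the integration, which is routine — that is responsible for the second summand $\frac{1}{1+\lambda_c/(\lambda_s\ln(1/\bar p_b))}\bigl(1-e^{-\pi\lambda_c/\lambda_s}\bigr)$ of \eqref{eq:CP1_cont_smls}; once the limiting density and this residual are in place, the remaining substitution, the two exponential integrals, and the regrouping are mechanical.
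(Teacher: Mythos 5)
Your route is the same as the paper's: inner-bound $\mathcal{F}$ by the MBFC disk, pass to the continuum law $\Pr(R\ge r)\approx \bar p_b^{\pi\lambda_s r^2}$ via the large-ratio estimate of Lemma~\ref{lem:num_site}, and evaluate $\E\bigl[1-e^{-\pi\lambda_c(\min(R,r_b))^2}\bigr]$ as a boundary term plus a Gaussian-type integral. Your evaluation of that integral is correct: with $\alpha=\pi\lambda_c$ and $\beta=\pi\lambda_s\ln\frac{1}{\bar p_b}$, the boundary term plus $\int_0^{r_b}$ collapses exactly to $\frac{\alpha}{\alpha+\beta}\bigl(1-e^{-(\alpha+\beta)r_b^2}\bigr)$, which is precisely the \emph{first} summand of \eqref{eq:CP1_cont_smls}.

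The genuine gap is the second summand. You attribute it to a ``near-origin residual'' coming from the pinning of $R$ at $r_0=\sqrt{s}/2$ and promise to ``keep track of it separately,'' but you never exhibit the term or show that it equals $\frac{\beta}{\alpha+\beta}\bigl(1-e^{-\pi\lambda_c/\lambda_s}\bigr)$; nor do you verify that adding a positive correction to the continuum expression still leaves a valid lower bound on $p_c$. This is not a cosmetic omission: the natural candidate correction --- the probability mass $1-\bar p_b^{\pi/4}$ that the continuum density misplaces on $[0,r_0)$, weighted by the connection probability there --- is of order $\frac{\pi\lambda_c s}{4}(1-\bar p_b^{\pi/4})$, which does not obviously reproduce the stated term of order $\pi\lambda_c s$. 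To be fair, this is exactly the step the paper itself glosses over: the first line of \eqref{eq:CP1_ttl_smls_v2} (boundary term plus integral) evaluates to the first summand alone, so the claimed equality with the two-summand closed form already presupposes an unexplained extra term. A complete proof would either display the residual explicitly (e.g., as $\int_0^{\rho}e^{-\pi\lambda_c r^2}f_R(r)\,{\rm d}r$ for a specific $\rho$ tied to the lattice spacing) and check the lower-bound direction, or drop the second summand and prove the (still asymptotically tight) bound consisting of the first summand only.
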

Theorem~\ref{thm:CP1_cont} provides a closed-form lower bound on the connectivity probability for the single-tier network with dense sites. One observation can be made from the result in (\ref{eq:CP1_cont_smls}) is that for given BS density $\lambda_c$, the lower bound on $p_c$ decreases to zero as the building density $\lambda_s$ grows to infinity due to the blockage effect. On the other hand,  the effect can be counteracted by increasing the BS density $\lambda_c$, as the lower bound on $p_c$ depends on the ratio between $\lambda_c$ and $\lambda_s$ in the high building density case. The rule of thumb on the required density can be obtained from (\ref{eq:CP1_cont_smls}) as follows. Assuming that the site occupancy probability $p_b$ is small, we have
\begin{align}
\ln \frac{1}{\bar{p}_b } = \ln \frac{1}{1 - p_b } \approx p_b. \label{logpb}
\end{align}
Substituting (\ref{logpb}) into (\ref{eq:CP1_cont_smls}) yields
\begin{align}
p_c\geq&\frac{1-e^{-\lambda_c \pi r_{b}^2(1 + \frac{\lambda_s p_b}{\lambda_c})}}{1+\frac{\lambda_s p_b}{\lambda_c}}+\frac{1- e^{ - \pi\frac{\lambda_c}{\lambda_s} }}{1+\frac{\lambda_c}{ \lambda_sp_b}}.
\end{align}
Then a required value of $p_c$ can be guaranteed if the ratio $\frac{\lambda_s p_b}{\lambda_c}$ is fixed, for small $p_b$.

\subsection{Tightening the Bounds on Connectivity Probabilities}

\begin{figure}[t]
\centering
{\includegraphics[width=8.5cm]{./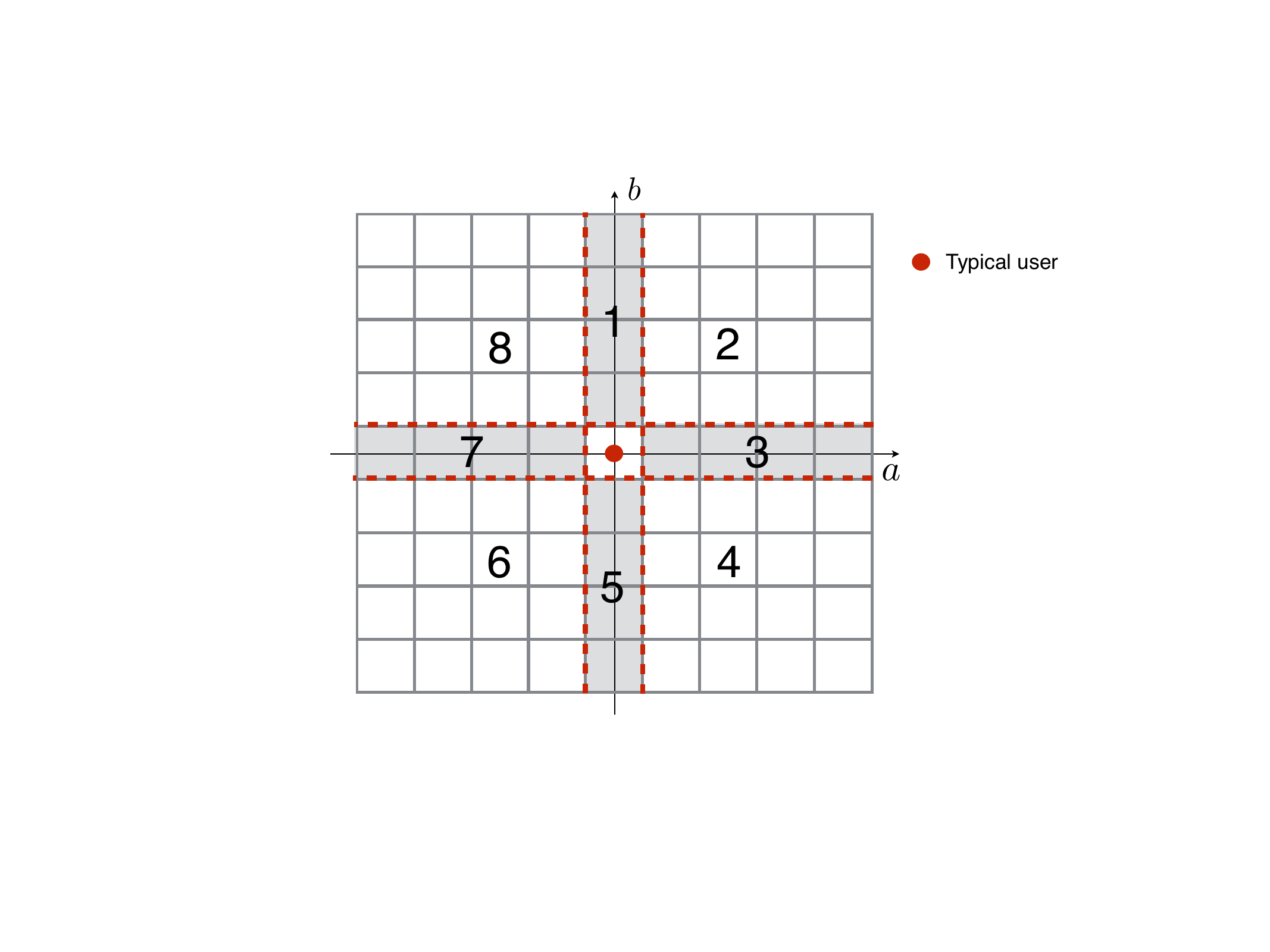}}
\caption{Dividing the urban region into eight non-overlapping regions, which are indexed by $n \in \{1,2,\cdots,8\}$.}\label{multi_region}
\end{figure}

Building on the analytical method developed in the preceding sub-section, a more complex one is developed in this sub-section to obtain lower bounds on the connectivity probability for finite and high site densities that are tighter than those in the preceding sub-section. The key idea is to define an inner bound of the blockage-free region $\mathcal{F}$ that is tighter than the MBFC region $\mathcal{B}(R)$ defined in the preceding sub-section. One promising way is to partition the plane into multiple non-overlapping regions and define the corresponding non-overlapping blockage-free regions, whose union yields the desired inner bound of $\mathcal{F}$. Given the typical outdoor user at the origin, we consider one particular partition of $\mathds{R}^2 \backslash \mathcal{S}_{0,0}$ that comprises eight non-overlapping regions (with an entire site belonging to one region), denoted as $\mathds{R}^2_{(1)}, \mathds{R}^2_{(2)}, \cdots, \mathds{R}^2_{(8)}$, as illustrated in Fig.~\ref{multi_region}. Note that this partition ensures that $\Sigma \cap \mathds{R}^2_{(n)}$, $n = 1,2,\cdots, 8$ are independent. In addition, the eight regions can be classified into two groups: regions indexed by the  $\{1,3,5,7\}$ and regions indexed by the $\{2,4,6,8\}$. The geometric characteristics of all regions in the same group are the same, and the characteristics of the two groups are different. Let
\begin{align}
R_n \triangleq \max & \quad  r\nonumber\\
\mathbf{s.t.}& \quad r \in \left\{\sqrt{s}(m+\frac{1}{2}): m =0,1,\cdots\right\},\nonumber\\
&\quad \mathcal{B}(r) \cap \Sigma \cap \mathds{R}^2_{(n)} = \emptyset.
\end{align}
Then, $\mathcal{B}(R_n) \cap \mathds{R}^2_{(n)}$ is the blockage-free region in $\mathds{R}^2_{(n)}$ with the largest radius. Notice that due to the independence of $\Sigma \cap \mathds{R}^2_{(n)}$, $n = 1,2,\cdots, 8$, $R_n$, $n = 1,2,\cdots, 8$ are independent and so are $\mathcal{B}(R_n) \cap \mathds{R}^2_{(n)}$, $n = 1,2,\cdots, 8$.
One can see that, the combined region $\bigcup_n \Big( \mathcal{B}(R_n) \cap \mathds{R}^2_{(n)} \Big)$ has more geometric degrees-of-freedoms in approaching $\mathcal{F}$ than $\mathcal{B}(R)$, yielding a tighter inner bound for $\mathcal{F}$. That is, we have
\begin{align}
\mathcal{B}(R) \subseteq  \cup_n \Big( \mathcal{B}(R_n) \cap \mathds{R}^2_{(n)} \Big)  \subseteq \mathcal{F}.
\end{align}
Then replacing $\mathcal{F}$ with $\bigcup_n \Big( \mathcal{B}(R_n) \cap \mathds{R}^2_{(n)} \Big)$ in the definition of the connectivity probability given in (\ref{def:pc}) leads to a tighter lower bound than that in \eqref{pc_rb_lc}:
\begin{align}
p_c &\geq {\rm Pr}\Big\{ \cup_n \Big( \mathcal{B}(R_n) \cap \mathds{R}^2_{(n)} \Big) \cap \mathcal{B}(r_b) \cap \Pi \neq \emptyset \Big\} \nn \\
&=1 - {\rm Pr}\l\{ \cup_n \Big( \mathcal{B}(R_n) \cap \mathds{R}^2_{(n)} \Big) \cap \mathcal{B}(r_b) \cap \Pi = \emptyset \r\} \nn \\
&\overset{(a)}{=} 1 - \prod_{n=1}^{8} {\rm Pr}\Big\{ \mathcal{B}(R_n) \cap \mathcal{B}(r_b) \cap \mathds{R}^2_{(n)} \cap \Pi = \emptyset \Big\}\nn \\
&=1 - \prod_{n=1}^{8}\Big( 1 - \tilde{p}_c^{(n)} \Big), \label{pc_8regions}
\end{align}
where
\begin{align}\label{pcn}
\tilde{p}_c^{(n)} = {\rm Pr}\Big\{  \mathcal{B}( \min(R_n, r_b) ) \cap \mathds{R}^2_{(n)} \cap \Pi \neq \emptyset \Big\}
\end{align}
represents the connectivity probability in region $\mathds{R}^2_{(n)}$ and (a) is due to the independence of $\mathcal{B}(R_n) \cap \mathds{R}^2_{(n)}\cap \Pi$, $n = 1,2,\cdots, 8$.

Using (\ref{pc_8regions}), a closed-form expression for the tighter lower bound on $p_c$ can be derived following a similar procedure as in the preceding sub-section. To this end, one can observe the similarity in the expressions for $\tilde{p}_c^{(n)}$ in (\ref{pcn}) and the lower bound on $p_c$ in \eqref{pc_rb_lc}. Then adopting a similar procedure as for deriving Theorem~\ref{thm:CP1} results in Lemma~\ref{lem:pc_8regions} in the sequel. The details are omitted for brevity.

\begin{lemma}\label{lem:pc_8regions}
\emph{The connectivity probabilities $\{ \tilde{p}_c^{(n)} \}$ defined in (\ref{pcn}) can be bounded as follows.
\item[--]  For $n =$ 1, 3, 5, and 7, $\tilde{p}_c^{(n)} \geq q^{(n)}$ where}
\begin{align}
q^{(n)} &= \Bigg( 1 - \exp\Big( - s \lambda_c \Big\lfloor \frac{r_b}{\sqrt{s}} -\frac{1}{2} \Big\rfloor^{+} \Big)\Bigg)\bar{p}_{b}^{\lceil \frac{r_b}{\sqrt{s}} -\frac{1}{2} \rceil^{+}} \nn \\
&\hspace{0mm}+ \sum_{\ell=0}^{\lfloor\frac{r_b}{\sqrt{s}}-\frac{1}{2}\rfloor^{+}}p_b \Big( 1 - e^{-s \lambda_c \ell} \Big)\bar{p}_{b}^{\ell}. \label{eqn:multi-q-n-1}
\end{align}
\emph{\item[--] For $n =$ 2, 4, 6, and 8, $\tilde{p}_c^{(n)} \geq q^{(n)}$ where
\begin{align}
q^{(n)} &=  \Bigg( 1 - \exp\Big( - \frac{1}{4}\pi s \lambda_c \Big(\Big\lfloor \frac{r_b}{\sqrt{s}} -\frac{1}{2} \Big\rfloor^{+} \Big)^2 \Big)\Bigg)\bar{p}_{b}^{\l( \lceil \frac{r_b}{\sqrt{s}}-\frac{1}{2} \rceil^{+}\r)^2} \nn \\
&\hspace{0mm}+ \sum_{\ell=0}^{\lfloor\frac{r_b}{\sqrt{s}}-\frac{1}{2}\rfloor^{+}}\Big( 1 - e^{- \frac{1}{4}\pi s \lambda_c \ell^2} \Big)\bar{p}_{b}^{\ell^2} \Big(1- \bar{p}_{b}^{2\ell+1} \Big).\label{eqn:multi-q-n-2}
\end{align}}
\end{lemma}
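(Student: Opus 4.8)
The plan is to mimic, region by region, the chain of steps that produced Theorem~\ref{thm:CP1} from Lemmas~\ref{lem:num_site} and~\ref{lem:area_dist}. By the symmetry of the eight-region partition of Fig.~\ref{multi_region} (the four regions in each group are congruent under rotations and reflections fixing the origin), it suffices to prove the bound for one representative of the group $\{1,3,5,7\}$ and one of the group $\{2,4,6,8\}$. Throughout write $r_\ell\triangleq\sqrt{s}(\ell+\tfrac12)$, so that each $R_n$ is supported on $\{r_\ell:\ell=0,1,\dots\}$, and recall from \eqref{pcn} that $\tilde p_c^{(n)}={\rm Pr}\{\mathcal{B}(\min(R_n,r_b))\cap\mathds{R}^2_{(n)}\cap\Pi\neq\emptyset\}$, with $\Pi$ a PPP of density $\lambda_c$ independent of the building lattice $\Phi$.

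First I would compute the law of $R_n$ restricted to $\mathds{R}^2_{(n)}$, the local analogue of Lemma~\ref{lem:area_dist}. Since $\{R_n\ge r_\ell\}$ is exactly the event that every site of $\mathds{R}^2_{(n)}$ met (fully or partially) by $\mathcal{B}(r_\ell)$ is empty, and these sites are i.i.d.\ Bernoulli with occupancy probability $p_b$, one gets ${\rm Pr}\{R_n\ge r_\ell\}=\bar p_b^{\,c^{(n)}(\ell)}$, where $c^{(n)}(\ell)$ is the number of such sites. The key combinatorial inputs are that for the thin axis-type regions ($n\in\{1,3,5,7\}$) the count $c^{(n)}(\ell)=\ell$ grows \emph{linearly}, giving ${\rm Pr}\{R_n=r_\ell\}=\bar p_b^{\,\ell}-\bar p_b^{\,\ell+1}=p_b\bar p_b^{\,\ell}$, whereas for the corner-type regions ($n\in\{2,4,6,8\}$) the count $c^{(n)}(\ell)=\ell^2$ grows \emph{quadratically}, giving ${\rm Pr}\{R_n=r_\ell\}=\bar p_b^{\,\ell^2}-\bar p_b^{\,(\ell+1)^2}=\bar p_b^{\,\ell^2}\bigl(1-\bar p_b^{\,2\ell+1}\bigr)$; the $\lceil\cdot\rceil^{+}$ in the leading terms of $q^{(n)}$ comes from upper-bounding $c^{(n)}$ at radius $r_b$ in the same spirit as $N^{+}$ in Lemma~\ref{lem:num_site}.

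Next I would lower-bound the area of the local blockage-free set $\mathcal{B}(\min(R_n,r_b))\cap\mathds{R}^2_{(n)}$ by that of an inscribed sector, the per-region counterpart of the MBFC construction: for the axis-type regions, a strip of $\lfloor r_b/\sqrt{s}-\tfrac12\rfloor^{+}$ (resp.\ $\ell$) full sites, of area $s\lfloor r_b/\sqrt{s}-\tfrac12\rfloor^{+}$ (resp.\ $s\ell$); for the corner-type regions, a quarter-disk of radius $\sqrt{s}\lfloor r_b/\sqrt{s}-\tfrac12\rfloor^{+}$ (resp.\ $\sqrt{s}\,\ell$), of area $\tfrac14\pi s\bigl(\lfloor r_b/\sqrt{s}-\tfrac12\rfloor^{+}\bigr)^2$ (resp.\ $\tfrac14\pi s\ell^2$). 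Since $\Pi$ is an independent PPP, conditioning on $\Phi$ (hence on $R_n$, which is determined by $\Phi$) and using the void probability ${\rm Pr}\{G\cap\Pi=\emptyset\}=e^{-\lambda_c|G|}$ yields $\tilde p_c^{(n)}=\E\bigl[1-e^{-\lambda_c|\mathcal{B}(\min(R_n,r_b))\cap\mathds{R}^2_{(n)}|}\bigr]\ge\E\bigl[1-e^{-\lambda_c A^{(n)}(\min(R_n,r_b))}\bigr]$, with $A^{(n)}$ the sector areas just listed. Splitting this expectation into the term $\{R_n\ge r_b\}$ and the terms $\{R_n=r_\ell\}$ for $\ell=0,\dots,\lfloor r_b/\sqrt{s}-\tfrac12\rfloor^{+}$, inserting the PMF from the previous paragraph, and simplifying gives \eqref{eqn:multi-q-n-1} and \eqref{eqn:multi-q-n-2} after routine algebra.

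The step I expect to be the real obstacle is the elementary but fiddly geometry of the fixed partition: establishing the exact (or sufficiently tight) values of $c^{(n)}(\ell)$ and identifying the largest sector inscribed in $\mathcal{B}(r_\ell)\cap\mathds{R}^2_{(n)}$ for each of the two region shapes, together with the correct floor/ceiling rounding and the careful treatment of the central site $\mathcal{S}_{0,0}$ and of the region boundaries. Once these two facts are pinned down for one representative per group, the probabilistic part is a direct transcription of the argument leading from Lemma~\ref{lem:area_dist} to Theorem~\ref{thm:CP1}, now carried out separately inside each region; note that the independence of $\{\Sigma\cap\mathds{R}^2_{(n)}\}$ plays no role in this lemma (it is used only for step (a) of \eqref{pc_8regions}).
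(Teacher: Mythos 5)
Your proposal follows exactly the route the paper intends: the paper itself omits the proof of this lemma, saying only that it is obtained "by a similar procedure as for deriving Theorem~\ref{thm:CP1}" with the strip area $s\ell$ and quarter-disk area $\tfrac14\pi s\ell^2$ serving as the per-region area lower bounds, and your per-region PMF of $R_n$ (linear site count $\ell$ for the axis regions, quadratic count $\ell^2$ for the corner regions) combined with the PPP void probability reproduces both displayed formulas term by term. The "fiddly geometry" you flag (exact site counts, the inscribed strip/sector, and the floor/ceiling rounding) is precisely the content the paper asserts without proof in the remark following the lemma, so your reconstruction is faithful to the paper's argument.
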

Note that in \eqref{eqn:multi-q-n-1} for $n=1,3,5,7$, $ s \Big\lfloor \frac{r_b}{\sqrt{s}} -\frac{1}{2} \Big\rfloor^{+}$  is the area of $\mathcal{B}(r_b) \cap \mathds{R}^2_{(n)}$ and $s \ell$ is a lower bound on the area of  $\mathcal{B}(r_{\ell}) \cap \mathds{R}^2_{(n)}$; in \eqref{eqn:multi-q-n-2} for $n=2,4,6,8$, $\frac{1}{4}\pi s  \Big(\Big\lfloor \frac{r_b}{\sqrt{s}} -\frac{1}{2} \Big\rfloor^{+}\Big)^2$   is the area of $\mathcal{B}(r_b) \cap \mathds{R}^2_{(n)} $ and $\frac{1}{4}\pi s  \ell^2$ is a lower bound on the area of  $\mathcal{B}(r_\ell) \cap \mathds{R}^2_{(n)}$.

Substituting Lemma~\ref{lem:pc_8regions} into (\ref{pc_8regions}) yields the following main result of this sub-section, which improves the result in Theorem~\ref{thm:CP1}.
\begin{theorem}\label{thm:CP1_8regions}
\emph{An alternative lower bound on the network-connectivity probability for the single-tier network is
\begin{align}
p_c \geq 1 - \prod_{n=1}^{8} \Big(1 - q^{(n)}\Big), \label{ieq:cover_8region}
\end{align}
where $\{ q^{(n)} \}$ are given in Lemma~\ref{lem:pc_8regions}.}
\end{theorem}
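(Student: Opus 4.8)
My proposed proof of Theorem~\ref{thm:CP1_8regions} is short once Lemma~\ref{lem:pc_8regions} is in hand; it is a monotonicity argument applied to the region-wise decomposition (\ref{pc_8regions}). Recall that (\ref{pc_8regions}) already gives $p_c \ge 1-\prod_{n=1}^{8}\big(1-\tilde{p}_c^{(n)}\big)$, with $\tilde{p}_c^{(n)}$ the per-region connectivity probability of (\ref{pcn}); this rests on the inclusion $\bigcup_n\big(\mathcal{B}(R_n)\cap\mathds{R}^2_{(n)}\big)\subseteq\mathcal{F}$, on the independence of the events $\{\mathcal{B}(R_n)\cap\mathds{R}^2_{(n)}\cap\Pi\}$ across $n$ (built into the chosen eight-region partition through the independence of $\{\Sigma\cap\mathds{R}^2_{(n)}\}$), and on the PPP void-probability formula. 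So the only thing left is to replace each $\tilde{p}_c^{(n)}$ by the explicit lower bound $q^{(n)}$ supplied by Lemma~\ref{lem:pc_8regions}.

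The step I would carry out is elementary. By Lemma~\ref{lem:pc_8regions}, $q^{(n)}\le\tilde{p}_c^{(n)}$ for each $n$, and since $\tilde{p}_c^{(n)}\in[0,1]$ (it is the probability in (\ref{pcn})) we have $0\le 1-\tilde{p}_c^{(n)}\le 1-q^{(n)}$ for every $n$. A product of nonnegative reals is nondecreasing in each factor, so $\prod_{n=1}^{8}\big(1-\tilde{p}_c^{(n)}\big)\le\prod_{n=1}^{8}\big(1-q^{(n)}\big)$; subtracting from $1$ reverses the inequality and chains with (\ref{pc_8regions}) to produce exactly (\ref{ieq:cover_8region}). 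That this bound improves the one in Theorem~\ref{thm:CP1} is then the geometric fact already recorded in the excerpt, namely $\mathcal{B}(R)\subseteq\bigcup_n\big(\mathcal{B}(R_n)\cap\mathds{R}^2_{(n)}\big)$, i.e.\ the combined region is a tighter inner bound of $\mathcal{F}$.

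The genuine obstacle is Lemma~\ref{lem:pc_8regions} itself, whose proof the excerpt omits; to supply it I would mimic, region by region, the derivation that took (\ref{pc_rb_lc}) to Theorem~\ref{thm:CP1}. First, for a ``straight-arm'' region $\mathds{R}^2_{(n)}$ with $n\in\{1,3,5,7\}$ and for a ``quadrant-type'' region with $n\in\{2,4,6,8\}$, I would obtain the law of the restricted radius $R_n$ --- the analogue of Lemma~\ref{lem:area_dist} with $\Sigma$ replaced by $\Sigma\cap\mathds{R}^2_{(n)}$ --- by counting the sites of $\mathds{R}^2_{(n)}$ met by $\mathcal{B}(r_\ell)$ and raising $\bar{p}_b$ to that power; a width-one arm accumulates on the order of $\ell$ sites (exponent linear in $\ell$, increment $1$), while a quadrant accumulates on the order of $\ell^2$ sites (exponent quadratic, shell increment $2\ell+1$). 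Second, I would lower bound the area of the truncated disk $\mathcal{B}(r_\ell)\cap\mathds{R}^2_{(n)}$ by $s\ell$ for an arm and by $\frac{1}{4}\pi s\ell^2$ for a quadrant --- precisely the quantities flagged in the remark following Lemma~\ref{lem:pc_8regions}. Third, I would feed these two ingredients into the void probability ${\rm Pr}\{\mathcal{B}(\min(R_n,r_b))\cap\mathds{R}^2_{(n)}\cap\Pi=\emptyset\}=\E\, e^{-\lambda_c|\mathcal{B}(\min(R_n,r_b))\cap\mathds{R}^2_{(n)}|}$ and split the expectation over $\{R_n\ge r_b\}$ and $\{R_n<r_b\}$ exactly as in (\ref{eq:CP1_ttl_thm_v2}), reading off the two displayed forms of $q^{(n)}$. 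The delicate points --- and where I expect to spend the effort --- are pinning down the exact site counts governing $R_n$ under this particular partition and establishing the clean geometric area lower bounds $s\ell$ and $\frac{1}{4}\pi s\ell^2$; everything downstream is the bookkeeping of the preceding subsection, and the eight-region partition was chosen precisely so that the factorization in step (a) of (\ref{pc_8regions}) is legitimate.
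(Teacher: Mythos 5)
Your proposal is correct and follows exactly the paper's route: the paper proves Theorem~\ref{thm:CP1_8regions} simply by substituting the bounds $\tilde{p}_c^{(n)} \geq q^{(n)}$ from Lemma~\ref{lem:pc_8regions} into the decomposition \eqref{pc_8regions}, which is precisely your monotonicity argument made explicit. Your supplementary sketch of Lemma~\ref{lem:pc_8regions} (site counts linear in $\ell$ for the arms and quadratic with shell increment $2\ell+1$ for the quadrants, area lower bounds $s\ell$ and $\frac{1}{4}\pi s\ell^2$) also matches what the paper indicates but omits "for brevity."
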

Consider the case where $\{ \tilde{p}_c^{(n)} \}$ are small. Then it follows from \eqref{pc_8regions} and Lemma~\ref{lem:pc_8regions} that
\begin{align}
p_c \geq \sum^{8}_{n=1} \tilde{p}_c^{(n)}\geq \sum_{n=1}^{8} q^{(n)}.
\end{align}
The summation reflects the improvement on the tightness of the lower bound on $p_c$ with respect to that in Theorem~\ref{thm:CP1}. That is, in the case of small $\{ \tilde{p}_c^{(n)} \}$, it can be seen more clearly that the result in Theorem~\ref{thm:CP1_8regions} improves that in Theorem~\ref{thm:CP1}.

Consider the network with high site density ($\lambda_s \rightarrow \infty$). The areas of the regions  $\mathds{R}^2_{(1)}, \mathds{R}^2_{(3)}, \mathds{R}^2_{(5)}$ and $\mathds{R}^2_{(7)}$ are close to zero when the site density $\lambda_s$ is sufficiently large (i.e., the site area $s$ is sufficiently small), which can be easily seen from Fig.~\ref{multi_region}, and thus can be ignored for ease of analysis. Following a similar procedure presented in the preceding sub-section, an asymptotic lower bound on the connectivity probability is derived as follows.
\begin{theorem}\label{thm:CP1_cont_8region}
\emph{For the case of high site density ($\lambda_s \rightarrow \infty$), an alternative lower bound on the connectivity probability for the single-tier network is}
\begin{align}
p_c \geq 1 - (1 - \tilde{p}_c)^4,
\end{align}
\emph{where}
\begin{align}\label{eq:CP1_cont_smls_8region}
\tilde{p}_{c} &= \frac{1-\bar{p}_{b}^{\frac{1}{4}\lambda_{s}\pi r_{b}^{2}}e^{-\frac{1}{4}\pi\lambda_c r_{b}^{2}}}{1-\frac{\lambda_{s}}{4\lambda_c}\ln\bar{p}_{b}}+\frac{1-e^{-\frac{\pi}{4}\frac{\lambda_c}{\lambda_s}}}{1-\frac{4\lambda_c}{\lambda_{s}\ln\bar{p}_{b}}}.
\end{align}
\end{theorem}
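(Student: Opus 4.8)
The plan is to start from the eight-region lower bound \eqref{pc_8regions}, namely $p_c\ge 1-\prod_{n=1}^{8}\bigl(1-\tilde p_c^{(n)}\bigr)$ with $\tilde p_c^{(n)}$ as in \eqref{pcn}, and to evaluate its right-hand side in the regime $\lambda_s\to\infty$. First I would discard the four ``thin'' regions $\mathds{R}^2_{(1)},\mathds{R}^2_{(3)},\mathds{R}^2_{(5)},\mathds{R}^2_{(7)}$: each is a strip whose width is a single site, so the area of $\mathcal{B}(r)\cap\mathds{R}^2_{(n)}$ for odd $n$ is $O\bigl(\sqrt{s}\,r\bigr)=O\bigl(r/\sqrt{\lambda_s}\bigr)\to0$. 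Quantitatively this is already visible in Lemma~\ref{lem:pc_8regions}: the bound \eqref{eqn:multi-q-n-1} has mean-BS term $s\lambda_c\lfloor r_b/\sqrt{s}-\tfrac12\rfloor^{+}\to0$, so $q^{(n)}\to0$ and hence $\tilde p_c^{(n)}\to0$ for $n=1,3,5,7$; the corresponding factors $(1-\tilde p_c^{(n)})$ tend to $1$ and drop out. This reduces the bound to $p_c\ge 1-\prod_{n\in\{2,4,6,8\}}\bigl(1-\tilde p_c^{(n)}\bigr)$ (and, since each discarded factor is $\le1$, this weaker inequality in fact holds for every $\lambda_s$).

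Next I would invoke symmetry. The pair (random lattice $\Sigma$, BS PPP $\Pi$) is invariant in law under the $90^{\circ}$ rotation about the origin, and this rotation permutes the four wedge regions $\mathds{R}^2_{(2)},\mathds{R}^2_{(4)},\mathds{R}^2_{(6)},\mathds{R}^2_{(8)}$; hence $R_2,R_4,R_6,R_8$ are identically distributed and, by construction, the four probabilities $\tilde p_c^{(n)}$ with $n$ even coincide with a common value $\tilde p_c$. Together with the previous step this gives $p_c\ge 1-(1-\tilde p_c)^{4}$, so it only remains to compute $\tilde p_c=\tilde p_c^{(2)}$ in the limit $\lambda_s\to\infty$.

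For this last step I would mimic the derivation of Theorem~\ref{thm:CP1_cont} from Theorem~\ref{thm:CP1}. The region $\mathds{R}^2_{(2)}$ subtends a right angle at the origin, so in the dense-site limit $\mathcal{B}(\rho)\cap\mathds{R}^2_{(2)}$ is effectively a quarter-disk: it has area $\tfrac14\pi\rho^{2}$, so the number of BSs it contains is Poisson with mean $\tfrac14\pi\lambda_c\rho^{2}$, while the number of sites it covers grows like $\tfrac14\lambda_s\pi\rho^{2}$, yielding the continuous surrogate ${\rm Pr}(R_2\ge\rho)\to\bar p_b^{\frac14\lambda_s\pi\rho^{2}}$ (the quarter-plane analogue of Lemma~\ref{lem:area_dist}). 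Writing $\tilde p_c=\E\bigl[1-e^{-\frac14\pi\lambda_c(\min(R_2,r_b))^{2}}\bigr]$, splitting on $\{R_2\ge r_b\}$ versus $\{R_2<r_b\}$, and integrating $1-e^{-\frac14\pi\lambda_c\rho^{2}}$ against the density $-\tfrac{d}{d\rho}\bar p_b^{\frac14\lambda_s\pi\rho^{2}}$ on $(0,r_b)$ plus the atom at $r_b$ reduces the problem to an elementary Gaussian-type integral; collecting the pieces produces the closed form \eqref{eq:CP1_cont_smls_8region}.

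I expect the delicate point to be precisely this evaluation of $\tilde p_c$: unlike the crude inscribed-disk bound behind Theorem~\ref{thm:CP1}, one must track the boundary and rounding corrections when replacing the wedge's discrete blockage-free radius by its quarter-disk continuous surrogate, and in particular account for the always-unblocked site at the origin — it is this near-origin contribution that gives rise to the second additive summand $\frac{1}{1-4\lambda_c/(\lambda_s\ln\bar p_b)}\bigl(1-e^{-\pi\lambda_c/(4\lambda_s)}\bigr)$ in \eqref{eq:CP1_cont_smls_8region}, rather than a single-term answer, and that pins down the coefficients (which sum to one). The remaining items — the integral itself, the algebraic simplification, and the observation that discarding the odd regions only weakens the bound so the stated inequality is a legitimate lower bound on $p_c$ — are routine.
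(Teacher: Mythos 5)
Your proposal is correct and follows essentially the same route as the paper: the paper likewise discards the four vanishing single-site-wide regions $\mathds{R}^2_{(1)},\mathds{R}^2_{(3)},\mathds{R}^2_{(5)},\mathds{R}^2_{(7)}$ in the dense-site limit and then applies the Theorem~\ref{thm:CP1_cont} derivation to each of the four symmetric quarter-plane wedges, replacing $\pi\rho^2$ by $\tfrac14\pi\rho^2$ in both the Poisson void probability and the site count. Your additional observations (that dropping the odd-region factors only weakens the bound, and that the second summand of \eqref{eq:CP1_cont_smls_8region} originates from the near-origin/always-empty-site contribution) are consistent with, and slightly more careful than, the paper's sketch.
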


Similarly, the result in Theorem~\ref{thm:CP1_cont_8region} improves that in Theorem~\ref{thm:CP1_cont}. For the case where $\tilde{p}_c$ is small, it follows from Theorem~\ref{thm:CP1_cont_8region} that $p_c \geq 4 \tilde{p}_c$, where the factor 4 arises from the number of the main regions $\mathds{R}^2_{(2)}, \mathds{R}^2_{(4)}, \mathds{R}^2_{(6)}, \mathds{R}^2_{(8)}$ in the partition of the plane (see Fig.~\ref{multi_region}). The factor 4 reflects the tightening of the lower bound on $p_c$. In general, increasing the number of regions in the partition leads to a more accurate approximation of the blockage-free region (see Fig.~\ref{sysmodel}(a)) and hence an increasingly tighter lower bound on $p_c$.

\subsection{Randomly Located Typical User }
The preceding analysis assumes a typical outdoor user at the origin for ease of notation and expression. Extending the results to the general case of a randomly located user is straightforward. To incorporate the effect due to the random offset of the typical user from the origin in a tractable manner, $r$ in the upper bound of   Lemma~\ref{lem:num_site}  is replaced with $r+\frac{1}{2}\sqrt{s}$, while  $r$ in the lower bound of   Lemma~\ref{lem:num_site}  is replaced with $r-\frac{1}{2}\sqrt{2s}$.
In other words, for the case of a randomly located user, $N^{-}(r)$ and $N^{+}(r)$ in \eqref{num_site_sr1} and \eqref{num_site_sr2} can be replaced with
\begin{align}\label{num_sit_random}
&N^{-}(r) = \l( 2\Big\lceil \frac{r}{\sqrt{2s}}-1 \Big \rceil^{+} +1 \r)^2, \nn \\
&N^{+}(r) = \Bigg( 2\Big\lceil \frac{r}{\sqrt{s}} \Big \rceil^{+} +1 \Bigg)^2.
\end{align}
It is straightforward to modify other analytical results accordingly, without changing the key insight. For example, the result in Theorem \ref{thm:CP1} can be modified using  $N^{+}(r)$ in \eqref{num_sit_random}.

\subsection{Extension to Channel Model with Both LoS and NLoS Paths}\label{sec:fadingSing}
Recall that in the previous analysis, we assume that the signals are completely blocked by buildings and hence NLoS paths are ignored and the focus is on LoS paths. Some experiments show that NLoS paths, although non-dominant, also exist in mmWave channels and can provide  connectivity  in the absence of LoS paths (i.e., LoS BSs) \cite{akdeniz2015mmwaveChannel, Macro2015mmwave, tatino2017beam}. Therefore, in this subsection, we extend the channel model in Section \ref{sec:channel} to incorporate both LoS and NLoS paths. It is worth mentioning that in most existing works such as \cite{Andrews:2016aa, bai2014analysis}, the receive signals are assumed to be spatially separated into LoS and NLoS signals and characterised using the same channel model but with different parameters (e.g., path loss exponents, values of Nakagami small-scale fading, etc.). Similarly, we assume LoS signals and NLoS signals are spatially  separated. However, different from existing works \cite{Andrews:2016aa, bai2014analysis}, we adopt the proposed LoS channel model in Section \ref{sec:channel} to characterise LoS signals, which is more practical and accurate, as illustrated in Sections \ref{sec:fadingSing} and \ref{sec:fadingHetnet}. In addition, for analytical tractability, as in \cite{Andrews:2016aa, bai2014analysis}, we assume rich scattering for  NLoS  channels and adopt the classic NLoS propagation model for characterising NLoS signals under this assumption. To be specific, the channel attenuation for the NLoS channel model is modelled as $G r^{-\alpha_{\text{nlos}}}$ (i.e., the receive SNR is $P_r = P G r^{-\alpha_{\text{nlos}}}$), where $G$ is a random variable modelling small-scale fading, $r$ is the propagation distance,  and $\alpha_{\text{nlos}} > 2 $ (usually larger than $\alpha_{\text{los}}$) is the path-loss exponent. It is important to note that the model is much simpler than the  LoS propagation model since the channel gain no longer depends on the specific locations of scatterers (or buildings).\footnote{ Thus, it is unnecessary to introduce a ``blockage-free region" for deriving the  connectivity probability when considering NLoS paths.}

Now, we analyze the connectivity probability under the generalized channel model with both LoS and NLoS paths. The typical user can connect to the network via either LoS or NLoS paths. Recalling that LoS and NLoS paths are assumed to be spatially separated, the connectivity probability can be written as
\begin{align}\label{pc_twoterms}
p_c &= 1 - \l(1 - p_c(\text{LoS}) \r)  \l(1 - p_c(\text{NLoS}) \r),
\end{align}
where $p_c(\text{LoS})$ ($p_c(\text{NLoS})$) denotes the connectivity probability with only LoS (NLoS) paths. The expression of $p_c(\text{LoS})$ is given in \eqref{def:pc} and the expression of  $p_c(\text{NLoS})$ is given by
\begin{align}
p_c(\text{NLoS})
&= {\rm Pr}\l\{\max_{Y \in \Pi} PG_Y |Y|^{-\alpha_{\text{nlos}}} \geq \theta \r\}.\label{Eq:Pc:NLoS}
\end{align}

The result of $p_c(\text{LoS})$ has been given in Theorem~\ref{thm:CP1}. Now, we calculate $p_c(\text{NLoS})$ defined in \eqref{Eq:Pc:NLoS} by applying the theory of Marked PPP  (see e.g., \cite[Chapter 7]{martin2012SGWCom}).   First, given the BS process $\Pi$ and using fading coefficients $\{G_Y\}$ as their marks, a marked PPP for BSs, denoted as $\tilde{\Pi}$,  can be defined as
\begin{align}\label{Pi_bs}
\tilde{\Pi} = \l\{ (Y, G_Y) \in \Pi\times \mathds{R}^+ ~|~ PG_Y  |Y|^{-\alpha_{\text{nlos}}} \geq \theta \r\}.
\end{align}
By using the Marking Theorem \cite[Chapter 7]{martin2012SGWCom}, the intensity measure of $\tilde{\Pi}$, denoted as $\mu(\tilde{\Pi})$, is derived as
\begin{align}\label{density_bs}
\mu(\tilde{\Pi}) = 2 \pi \lambda_c \int_{0}^{\infty} {\rm Pr} \l\{ G_Y \geq \frac{\theta}{P} r^{\alpha_{\text{nlos}}}  \r\}r {\rm d}r,
\end{align}
where $G_Y$ is a random variable following a specific fading distribution, e.g., Rayleigh fading \cite{bai2014analysis}, Nakagami fading \cite{Andrews:2016aa}, or Log-Normal fading in \cite{Macro2015mmwave}. In particular,  for the simple case of Rayleigh fading, i.e., $G_Y \sim \exp(1)$, we have
\begin{align}\label{density_rayfading}
\mu(\tilde{\Pi})  = \pi \lambda_c \int\limits^{\infty}_{0} e^{- \frac{\theta t^{\frac{\alpha_{\text{nlos}}}{2}}}{P}} {\rm d}t = \frac{2\pi \lambda_c \Gamma\l(\frac{2}{\alpha_{\text{nlos}}}\r)\l( \frac{P}{\theta}\r)^{\frac{2}{\alpha_{\text{nlos}}}}}{\alpha_{\text{nlos}}} ,
\end{align}
where $\Gamma(\cdot)$ is the Gamma function. For the general case of Nakagami fading i.e., $G_Y \sim \Gamma(g, 1/g)$ \cite{Andrews:2016aa, bai2014analysis}, $\mu(\tilde{\Pi})$ can be approximated as
\begin{align}\label{nakagami_fading}
\mu(\tilde{\Pi}) \approx \pi \lambda_c \int_{0}^{\infty} \l( 1 - \l( 1 - e^{- g(g!)^{-\frac{1}{g}}\frac{\theta}{P} t^{\frac{\alpha_{\text{nlos}}}{2}} } \r)^g \r) {\rm d}t.
\end{align}
By  \eqref{Eq:Pc:NLoS} and \eqref{Pi_bs}, $p_c(\text{NLoS})$ is the void probability with respect to the marked PPP $\tilde{\Pi}$ and is given by
\begin{equation}
p_c(\text{NLoS})  = 1 - e^{- \mu(\tilde{\Pi})}.
\end{equation}
It follows from \eqref{pc_twoterms} that the connectivity probability under the channel model with LoS and NLoS paths can be derived as
\begin{align}\label{PC_sing}
p_c = 1 - \l(1 - p_c(\text{LoS}) \r)e^{- \mu(\tilde{\Pi})},
\end{align}
where $p_c(\text{LoS})$ is given in Theorem~\ref{thm:CP1} and $\mu(\tilde{\Pi})$ is given in \eqref{density_bs}. For the special case of Rayleigh fading, substituting \eqref{density_rayfading} into \eqref{PC_sing} gives
\begin{align}\label{PC_sing_rayfading}
p_c =   1 - \l(1 - p_c(\text{LoS}) \r) \exp\l(-\frac{2\pi \lambda_c  \Gamma\l(\frac{2}{\alpha_{\text{nlos}}}\r)}{\alpha_{\text{nlos}}} \l( \frac{P}{\theta}\r)^{\frac{2}{\alpha_{\text{nlos}}}} \r).
\end{align}
For the case of Nakagami fading, we have
\begin{align}\label{PC_sing_nakagamifading}
p_c \approx    1 - \l(1 - p_c(\text{LoS}) \r) e^{- \pi \lambda_c \int_{0}^{\infty} \l( 1 - \l( 1 - e^{- g(g!)^{-\frac{1}{g}}\frac{\theta}{P} t^{\frac{\alpha_{\text{nlos}}}{2}} } \r)^g \r) {\rm d}t }.
\end{align}

\section{Connectivity of Heterogeneous Networks}\label{HetNets}
In the preceding section, the connectivity probabilities are analyzed for the single-tier mmWave network. Noting that the HetNet provides a promising approach to satisfy the rapid traffic growth by deploying short range small BSs (e.g., pico BSs) together with traditional macro BSs. The results are extended in this section to the $K$-tier mmWave HetNet. The procedure is straightforward and the details are omitted.

\subsection{Bounding Connectivity Probabilities for the $K$-tier HetNet}
It is challenging to give an exact result for $\widehat{p}_c$ due to the spatial correlation of $\{\mathcal{F}^{(k)}\}$. For analytical tractability, a lower bound on $\widehat{p}_c$ is given by selecting the largest value of the network connectivity probabilities of the $K$ tiers. Mathematically,  the  connectivity probability for the $K$-tier HetNet as defined in  \eqref{def:pc_k} is lower bounded as follows
\begin{align}
\widehat{p}_c&={\rm Pr}\l\{\bigcup_{k=1}^{K} \left( \mathcal{F}^{(k)} \cap \mathcal{B}(r^{(k)}_b) \cap \Pi^{(k)}\right)\neq \emptyset\r\}\notag\\
&\geq\max_{k \in \{1,\cdots, K\}} p_c^{(k)}, \label{eq:CPK_gene}
\end{align}
where $p_c^{(k)}\triangleq {\rm Pr} \l\{ \mathcal{F}^{(k)} \cap \mathcal{B}(r^{(k)}_b) \cap \Pi^{(k)} \neq \emptyset \r\} $ represents the connectivity probability for the $k$th tier network.

The per-tier connectivity probabilities $\{p_c^{(k)}\}$ can be bounded by modifying the lower bound on the single-tier counterpart in Theorem~\ref{thm:CP1}. Specifically, the modification involves replacing the blockage-occupancy probability $p_b$ with $(1 - \prod_{\ell=1}^k \bar{p}_b^{(\ell)})$ since the signals transmitted by a BS in the $k$th tier network can be blocked by  buildings of heights $h^{(1)},\cdots, h^{(k)}$, as illustrated in Section~\ref{system_model}. This yields the following corollary of Theorem~\ref{thm:CP1}.
\begin{corollary}[Per-tier Connectivity Probability for the $K$-tier HetNet]\label{cor:CPi_arb}
\emph{The connectivity probability for the $k$th tier network can be lower bounded as $p_{c}^{(k)}\geq \eta^{(k)}$, where $\eta^{(k)}$ is }
\begin{align}
&\eta^{(k)} = \sum_{n=0}^{\Big\lfloor\frac{r^{(k)}_{b}}{\sqrt{s}}-\frac{1}{2}\Big\rfloor^{+}}\l(1 - e^{-\pi s\lambda^{(k)}_c (n+\frac{1}{2})^2}\r) q_k^{4(n+\frac{1}{2})^2-1}\l( 1 - q_k^{8(n+1)} \r) \nn \\
&\hspace{20mm}+\l(1-e^{-\pi\lambda^{(k)}_c \l(r^{(k)}_{b}\r)^{2}}\r)q_k^{ N^{+}\l(r^{(k)}_{b}\r)-1},
\end{align}
\emph{with $q_k  = \prod_{\ell=1}^k \bar{p}_b^{(\ell)}$ and $N^{+}(r^{(k)}_{b})$ given in \eqref{num_site_sr1} and \eqref{num_site_sr2}.}
\end{corollary}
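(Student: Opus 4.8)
The plan is to obtain Corollary~\ref{cor:CPi_arb} as a term-by-term transcription of the proof of Theorem~\ref{thm:CP1}, after replacing the single-tier blockage lattice by the ``effective'' lattice that a tier-$k$ link actually sees. Recall that the single-tier bound rested on three facts: (i) the maximum inscribed blockage-free circle $\mathcal{B}(R)$ inner-bounds $\mathcal{F}$; (ii) the PMF of $R$ in Lemma~\ref{lem:area_dist}, which enters only through the per-site void probability $\bar{p}_b$ together with the i.i.d.\ structure of the sites; and (iii) the void probability of the PPP $\Pi$ over the disk $\mathcal{B}(\min(R,r_b))$, evaluated using the independence of $\Pi$ and the building lattice. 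I would reuse all three.

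Fix the tier index $k$ and put $\Sigma_k \triangleq \bigcup_{\ell=1}^{k}\Sigma^{(\ell)}$, so that $\mathcal{F}^{(k)}=\{X\mid L(X,U_0)\cap\Sigma_k=\emptyset\}$. The first step is to note that, although the per-height regions $\{\Sigma^{(\ell)}\}$ are correlated, $\Sigma_k$ is again a random regular lattice whose sites are occupied independently of one another, now with per-site void probability
\[
q_k \;=\; \prod_{\ell=1}^{k}\bar{p}_b^{(\ell)},
\]
namely the probability, under the $K$-height model of Section~\ref{system_model}, that a given site carries none of the building heights $h^{(1)},\cdots,h^{(k)}$ that block tier $k$. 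Defining the tier-$k$ maximum inscribed blockage-free radius
\[
R^{(k)} \triangleq \max\Big\{\, r\in\{\sqrt{s}(n+\frac{1}{2}):n=0,1,\cdots\} \;\Big|\; \mathcal{B}(r)\cap\Sigma_k=\emptyset \,\Big\},
\]
Lemma~\ref{lem:area_dist} applies verbatim with $\bar{p}_b\mapsto q_k$, giving ${\rm Pr}(R^{(k)}=r_n)=q_k^{4(n+\frac{1}{2})^2-1}-q_k^{4(n+\frac{3}{2})^2-1}=q_k^{4(n+\frac{1}{2})^2-1}(1-q_k^{8(n+1)})$, while Lemma~\ref{lem:num_site} is a purely geometric statement about the lattice and is unaltered, so $N^{\pm}$ keep their form and ${\rm Pr}(R^{(k)}\ge r_b^{(k)})=q_k^{N(r_b^{(k)})-1}\ge q_k^{N^{+}(r_b^{(k)})-1}$.

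It then remains to repeat the chain \eqref{pc_rb_lc}--\eqref{eq:CP1_ttl_thm_v2}. From $\mathcal{B}(R^{(k)})\subseteq\mathcal{F}^{(k)}$ and the void probability of the PPP $\Pi^{(k)}$ with density $\lambda_c^{(k)}$ (which, being independent of the building lattice, lets the Poisson average and the average over $R^{(k)}$ separate) I obtain
\[
p_c^{(k)} \;\ge\; \E\Big[\,1-e^{-\pi\lambda_c^{(k)}\,(\min(R^{(k)},r_b^{(k)}))^{2}}\,\Big].
\]
Splitting this expectation according to $R^{(k)}\ge r_b^{(k)}$ versus $R^{(k)}<r_b^{(k)}$, bounding ${\rm Pr}(R^{(k)}\ge r_b^{(k)})$ below by $q_k^{N^{+}(r_b^{(k)})-1}$ on the first event and inserting the PMF of $R^{(k)}$ on the second, exactly as in the passage from \eqref{eq:CP1_ttl_thm_v2} to \eqref{eq:CP1_dis}, reproduces the claimed expression for $\eta^{(k)}$ with the substitutions $\bar{p}_b\mapsto q_k$, $\lambda_c\mapsto\lambda_c^{(k)}$ and $r_b\mapsto r_b^{(k)}$.

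I expect the only point that is not purely mechanical to be the observation in the second paragraph: confirming that the union of the \emph{correlated} per-height processes $\{\Sigma^{(\ell)}\}_{\ell=1}^{k}$ is itself a Bernoulli random lattice with i.i.d.\ sites and per-site void probability exactly $q_k$, so that the proof of Lemma~\ref{lem:area_dist}, which uses precisely this i.i.d.\ Bernoulli structure, carries over unchanged. Once this is verified the corollary is immediate, and the remaining computation may be omitted for brevity, as the paper already does for its tightened and multi-tier variants.
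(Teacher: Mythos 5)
Your write-up follows the same route as the paper: the paper's own justification of Corollary~\ref{cor:CPi_arb} is a one-sentence instruction to repeat the derivation of Theorem~\ref{thm:CP1} with $\lambda_c\mapsto\lambda_c^{(k)}$, $r_b\mapsto r_b^{(k)}$ and $\bar p_b\mapsto q_k$, and your expansion of that substitution through the chain (\ref{pc_rb_lc})--(\ref{eq:CP1_ttl_thm_v2}) is faithful to it. You also correctly isolate the one step that is not purely mechanical: verifying that $\Sigma_k=\bigcup_{\ell=1}^{k}\Sigma^{(\ell)}$ is an i.i.d.\ Bernoulli lattice whose per-site void probability is exactly $q_k$, so that Lemma~\ref{lem:area_dist} applies with $\bar p_b$ replaced by $q_k$.

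That is precisely where the argument has a gap --- one you inherit from the paper rather than introduce, but which your phrasing (``exactly $q_k$'') asserts rather than proves. Under the $K$-height model of Section~\ref{system_model}, each site's mark $\mathcal{Z}_{a,b}$ is a \emph{single} categorical variable equal to $h^{(\ell)}$ with probability $p_b^{(\ell)}$ and $\sum_{\ell=0}^{K}p_b^{(\ell)}=1$. Hence the probability that a site carries none of the heights $h^{(1)},\cdots,h^{(k)}$ is $v_k=1-\sum_{\ell=1}^{k}p_b^{(\ell)}$, not $q_k=\prod_{\ell=1}^{k}\bar p_b^{(\ell)}$; the product form would be exact only if each height were attached to a site as an independent Bernoulli mark (in which case a site could carry several heights, contradicting the stated model), and by the Weierstrass inequality $q_k\geq v_k$ with equality only for $k=1$. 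The i.i.d.-across-sites structure you need \emph{does} hold, but with void probability $v_k$. This matters because the single-tier lower bound is monotone increasing in the void probability ($R$ is stochastically increasing in it, and $1-e^{-\pi\lambda_c(\min(R,r_b))^2}$ is nondecreasing in $R$), so evaluating it at $q_k>v_k$ yields a quantity that is not guaranteed to lower-bound $p_c^{(k)}$ for $k\geq 2$. The paper tacitly concedes the mismatch by noting that $q_k$ equals $1-\sum_{\ell=1}^{k}p_b^{(\ell)}$ only ``when $\{p_b^{(k)}\}$ are small.'' To make your proof rigorous, run your substitution with $v_k$ in place of $q_k$; every other step then goes through verbatim.
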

It is worth mentioning that  $q_k$ can be rewritten as $\left(1-\sum_{\ell=1}^{k}p^{(\ell)}_{b}\right)$ when  $\{p^{(k)}_{b}\}$ are small, giving a simple form. Substituting  the result in Corollary~\ref{cor:CPi_arb} into \eqref{eq:CPK_gene} leads to the first main result of this section.
\begin{theorem}[Connectivity Probability for the $K$-tier HetNet] \label{Theo:HetNets_arbitS_1region}
\emph{The connectivity probability for the $K$-tier HetNet can be lower bounded as}
\begin{align}\label{eq:CPK_arb_all}
\widehat{p}_c \geq \max_{k \in \{1,\cdots, K\}}\eta^{(k)},
\end{align}
\emph{where $\eta^{(k)}$ is given in Corollary~\ref{cor:CPi_arb}.}
\end{theorem}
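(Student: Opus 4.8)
The plan is to derive \eqref{eq:CPK_arb_all} by composing two ingredients that are already in place: the tier-selection inequality \eqref{eq:CPK_gene}, namely $\widehat{p}_c\ge\max_{k}p_c^{(k)}$, and the per-tier lower bound $p_c^{(k)}\ge\eta^{(k)}$ of Corollary~\ref{cor:CPi_arb}. For completeness I would first recall why \eqref{eq:CPK_gene} holds: fix an arbitrary tier index $j\in\{1,\dots,K\}$; the tier-$j$ ``connected'' set $\mathcal{F}^{(j)}\cap\mathcal{B}(r^{(j)}_b)\cap\Pi^{(j)}$ is one of the $K$ members of the union appearing in \eqref{eqn:K-HetNet-coverprob}, so the event $\{\mathcal{F}^{(j)}\cap\mathcal{B}(r^{(j)}_b)\cap\Pi^{(j)}\neq\emptyset\}$ is contained in $\{\bigcup_{k=1}^{K}(\mathcal{F}^{(k)}\cap\mathcal{B}(r^{(k)}_b)\cap\Pi^{(k)})\neq\emptyset\}$; passing to probabilities gives $p_c^{(j)}\le\widehat{p}_c$, and since $j$ was arbitrary, $\widehat{p}_c\ge\max_{k}p_c^{(k)}$.

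Next I would invoke Corollary~\ref{cor:CPi_arb}, which supplies $p_c^{(k)}\ge\eta^{(k)}$ for every $k\in\{1,\dots,K\}$, and chain the two bounds:
\begin{align}
\widehat{p}_c\ \ge\ \max_{k\in\{1,\dots,K\}}p_c^{(k)}\ \ge\ \max_{k\in\{1,\dots,K\}}\eta^{(k)},\nonumber
\end{align}
where the last step uses the elementary fact that $a_k\ge b_k$ for all $k$ implies $\max_k a_k\ge\max_k b_k$ (evaluate the left-hand maximum at the index attaining the right-hand maximum). This is precisely \eqref{eq:CPK_arb_all}, so the argument closes.

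Thus the composition itself is routine, and the substance of the statement is inherited from its two ingredients; that is where I would concentrate the effort. For \eqref{eq:CPK_gene}, the point to emphasise is that an exact expression for $\widehat{p}_c$ is out of reach because the per-tier blockage-free regions $\{\mathcal{F}^{(k)}\}$ are mutually correlated — a sufficiently tall building simultaneously shapes $\mathcal{F}^{(k)}$ for several tiers — so the union probability in \eqref{eqn:K-HetNet-coverprob} does not factorise; replacing the union by its largest single term is the price of tractability and is exactly where the bound discards the macro-diversity gain across tiers. For Corollary~\ref{cor:CPi_arb}, the input is that from the viewpoint of tier $k$ only buildings of heights $h^{(1)},\dots,h^{(k)}$ can block signals, so the single-tier machinery of Theorem~\ref{thm:CP1} applies with the site-void probability $\bar{p}_b$ replaced by $q_k=\prod_{\ell=1}^{k}\bar{p}_b^{(\ell)}$ and $(\lambda_c,r_b)$ replaced by $(\lambda_c^{(k)},r^{(k)}_b)$. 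If a sharper statement were desired, the eight-region inner bound of Theorem~\ref{thm:CP1_8regions} could be grafted onto each tier before the maximum over $k$ is taken; this would tighten each $\eta^{(k)}$ but would leave the maximum-over-tiers step untouched, so the genuine obstacle — the inter-tier correlation that forces the use of $\max_k$ rather than a union-type bound — would remain.
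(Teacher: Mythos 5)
Your proposal is correct and follows essentially the same route as the paper: the paper obtains Theorem~\ref{Theo:HetNets_arbitS_1region} precisely by substituting the per-tier bound $p_c^{(k)}\geq\eta^{(k)}$ of Corollary~\ref{cor:CPi_arb} into the tier-selection inequality \eqref{eq:CPK_gene}, which it justifies exactly as you do (each tier's connectivity event is contained in the union event, and the union probability is replaced by its largest single term for tractability in the face of the correlation among $\{\mathcal{F}^{(k)}\}$). Your additional remarks on why the maximum rather than a union-type bound is used, and on how the eight-region refinement could be grafted on per tier, accurately reflect the surrounding discussion in the paper.
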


Next, consider the case with a high site density ($\lambda_s \rightarrow \infty$) as before. By modifying the result in Corollary~\ref{cor:CPi_arb} in the same  way as for obtaining Theorem~\ref{thm:CP1_cont}, the connectivity probability for the $k$th tier network is given in  Corollary~\ref{them:CPi_smls}.
\begin{corollary}[Per-tier Connectivity Probability for the $K$-tier HetNet with a High Site Density]\label{them:CPi_smls}
\emph{For the case of dense sites ($\lambda_s \rightarrow \infty$),  the  connectivity probability  for the $k$th tier network is bounded as $p_c^{(k)} \geq\eta^{(k)}$ with
\begin{align}\label{eq:CPK_cont_smls}
\eta^{(k)} &  = \frac{1- e^{-\pi \lambda^{(k)}_c \left(r^{(k)}_{b}\right)^{2}\Big(1+\frac{\lambda_{s}}{\lambda^{(k)}_c}\ln\frac{1}{q_k}\Big)}}{1+\frac{\lambda_{s}}{\lambda^{(k)}_c}\ln\frac{1}{q_k}}+ \frac{1-e^{-
\pi\frac{\lambda^{(k)}_c}{\lambda_s}}}{1+\frac{\lambda^{(k)}_c}{\lambda_{s}\ln\frac{1}{q_k}}}, \nn \\
\end{align}
and  $q_k$ given in  Corollary~\ref{cor:CPi_arb}.}
\end{corollary}
Corollary~\ref{them:CPi_smls} gives a closed-form lower bound on the  connectivity probability for the $k$th tier network with dense sites. By Theorem~\ref{Theo:HetNets_arbitS_1region} and Corollary~\ref{them:CPi_smls}, the connectivity probability for the  $K$-tier HetNet with dense sites is obtained as shown below.
\begin{theorem}[Connectivity Probability for the $K$-tier HetNet with Dense Sites]\label{thm:CPK_smls}
\emph{For the case of dense sites ($\lambda_s \rightarrow \infty$), the connectivity probability for the $K$-tier HetNet can be lower bounded as}
\begin{align}
\widehat{p}_c \geq \max_{k \in \{1,\cdots, K\}} \eta^{(k)},
\end{align}
\emph{where $\eta^{(k)}$ is given  in Corollary~\ref{them:CPi_smls}.}
\end{theorem}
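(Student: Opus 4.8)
The plan is to chain two facts already established in the excerpt: the generic union-to-maximum bound \eqref{eq:CPK_gene} and the per-tier dense-site bound of Corollary~\ref{them:CPi_smls}. First I would recall why \eqref{eq:CPK_gene} holds: the set $\bigcup_{k=1}^{K}(\mathcal{F}^{(k)}\cap\mathcal{B}(r^{(k)}_b)\cap\Pi^{(k)})$ contains each individual set $\mathcal{F}^{(k)}\cap\mathcal{B}(r^{(k)}_b)\cap\Pi^{(k)}$, so the event that the union is nonempty is implied by the event that any fixed term is nonempty; monotonicity of probability then gives $\widehat{p}_c\ge p_c^{(k)}$ for every $k$, hence $\widehat{p}_c\ge\max_{k\in\{1,\dots,K\}}p_c^{(k)}$. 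This step deliberately discards the joint diversity over tiers, which is precisely what sidesteps the correlations among $\{\mathcal{F}^{(k)}\}$ carried by the common lattice $\tilde{\Phi}$.

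Next I would invoke Corollary~\ref{them:CPi_smls}, which in the regime $\lambda_s\to\infty$ supplies the closed-form bound $p_c^{(k)}\ge\eta^{(k)}$ for each tier $k$, with $\eta^{(k)}$ as in \eqref{eq:CPK_cont_smls}. For self-containedness I would note that Corollary~\ref{them:CPi_smls} is itself obtained by taking the high-site-density limit in the non-asymptotic per-tier bound of Corollary~\ref{cor:CPi_arb}, exactly as Theorem~\ref{thm:CP1_cont} is obtained from Theorem~\ref{thm:CP1}; the only substitution is that the single-tier site-void probability $\bar{p}_b$ is replaced throughout by $q_k=\prod_{\ell=1}^{k}\bar{p}_b^{(\ell)}$, reflecting that a tier-$k$ link is obstructed by the taller buildings of heights $h^{(1)},\dots,h^{(k)}$.

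Finally I would combine the two displays. Since $p_c^{(k)}\ge\eta^{(k)}$ holds term-by-term, choosing $k^\star\in\arg\max_{k}\eta^{(k)}$ gives $\max_{k}p_c^{(k)}\ge p_c^{(k^\star)}\ge\eta^{(k^\star)}=\max_{k}\eta^{(k)}$, and composing this with $\widehat{p}_c\ge\max_{k}p_c^{(k)}$ yields $\widehat{p}_c\ge\max_{k\in\{1,\dots,K\}}\eta^{(k)}$, which is exactly the assertion of Theorem~\ref{thm:CPK_smls}.

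I do not anticipate a genuine obstacle here: the substance was spent upstream in deriving \eqref{eq:CPK_gene} and the per-tier bounds, and what remains is the elementary fact that $\max$ preserves a pointwise inequality. The one point worth stating explicitly is that the resulting bound is, by construction, a ``best single tier'' bound and hence loose---it ignores the reduction in connectivity outage that a user enjoys from simultaneously seeing several tiers---so any sharpening would have to confront the dependence of $\{\mathcal{F}^{(k)}\}$ through $\tilde{\Phi}$ directly, for instance by applying the region-partitioning idea of Section~\ref{single-tier} tier by tier.
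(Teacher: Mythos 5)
Your proposal is correct and follows exactly the paper's route: it chains the union-to-maximum bound \eqref{eq:CPK_gene}, which gives $\widehat{p}_c\ge\max_{k}p_c^{(k)}$, with the per-tier dense-site bound $p_c^{(k)}\ge\eta^{(k)}$ of Corollary~\ref{them:CPi_smls}, and the final step that $\max$ preserves pointwise inequalities is exactly what the paper implicitly uses. Your closing remark on the looseness of the best-single-tier bound also matches the paper's own discussion of why it later resorts to the independence approximation in \eqref{eq:CPK_appro}.
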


The lower bound can be tightened using the same approach as in the preceding section. The results are summarized in the following theorem.
\begin{theorem}\label{thm:CP1_8regions_hetnet}
\emph{An alternative lower bound on the connectivity probability for  the $K$-tier HetNet is
\begin{align}
\widehat{p}_c \geq  \max_{k \in \{1,\cdots, K\}}\l\{1 - \prod_{n=1}^{8} \l(1 - \eta^{(k,n)}\r)\r\}, \label{ieq:cover_8region_hetnets}
\end{align}
where $\{ \eta^{(k,n)} \}$ are stated as follows.
\item[--]  For $n =$ 1, 3, 5, and 7:
\begin{align}
\eta^{(k,n)}&= \l( 1 - e^{- s \lambda^{(k)}_c \l\lfloor \frac{r^{(k)}_{b}}{\sqrt{s}} -\frac{1}{2} \r\rfloor^{+}}\r)q_k^{\l\lceil \frac{r^{(k)}_{b}}{\sqrt{s}} -\frac{1}{2} \r\rceil^{+}} \nn \\
&\hspace{10mm}+ \sum_{\ell=0}^{\l\lfloor \frac{r^{(k)}_{b}}{\sqrt{s}} -\frac{1}{2} \r\rfloor^{+}}\l( 1 - e^{-s\lambda^{(k)}_c \ell} \r)\bar{q}_k q_k^\ell.
\end{align}}
\emph{\item[--] For $n =$ 2, 4, 6, and 8:
\begin{align}
\eta^{(k,n)} &=  \l( 1 - e^{- \frac{\pi s }{4}\lambda^{(k)}_c \Big(\Big\lfloor \frac{r^{(k)}_{b}}{\sqrt{s}} -\frac{1}{2} \Big\rfloor^{+} \Big)^2 }\r)q_k^{\l( \l\lceil \frac{r^{(k)}_{b}}{\sqrt{s}}-\frac{1}{2} \r\rceil^{+}\r)^2} \nn \\
&\hspace{0mm}+\sum_{\ell=0}^{\l\lfloor \frac{r^{(k)}_{b}}{\sqrt{s}} -\frac{1}{2} \r\rfloor^{+}}\l( 1 - e^{- \frac{\pi s}{4} \lambda^{(k)}_c \ell^2} \r)q_k^{\ell^2} \l(1- q_k^{2\ell+1} \r).
\end{align}
Here,  $q_k  = \prod_{\ell=1}^k \bar{p}_b^{(\ell)}$.}
\end{theorem}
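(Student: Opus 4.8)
The plan is to fuse the two refinements already developed for the single-tier network --- the eight-region inner bound of the blockage-free region behind Theorem~\ref{thm:CP1_8regions} and Lemma~\ref{lem:pc_8regions}, and the per-tier ``effective occupancy'' substitution $\bar{p}_b\mapsto q_k$ of Corollary~\ref{cor:CPi_arb} --- and then to combine the outcome with the crude ``largest tier'' bound \eqref{eq:CPK_gene}. Since the event in \eqref{eqn:K-HetNet-coverprob} is a union of the per-tier events, \eqref{eq:CPK_gene} gives $\widehat{p}_c\ge\max_k p_c^{(k)}$ at once, so it suffices to prove, for each fixed $k$, the per-tier bound $p_c^{(k)}\ge 1-\prod_{n=1}^{8}\big(1-\eta^{(k,n)}\big)$.

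To establish this I would mimic the eight-region construction of Section~\ref{single-tier}, but with the blockage region $\Sigma$ replaced by the nested union $\bigcup_{\ell=1}^{k}\Sigma^{(\ell)}$ that obstructs tier-$k$ links. Partition $\mathds{R}^2\backslash\mathcal{S}_{0,0}$ into the eight regions $\mathds{R}^2_{(1)},\dots,\mathds{R}^2_{(8)}$ of Fig.~\ref{multi_region}, each consisting of entire sites, and for $n=1,\dots,8$ let $R_n^{(k)}$ be the largest radius in $\{\sqrt{s}(m+\tfrac12):m\ge0\}$ with $\mathcal{B}(r)\cap\big(\bigcup_{\ell=1}^{k}\Sigma^{(\ell)}\big)\cap\mathds{R}^2_{(n)}=\emptyset$, so that $\bigcup_{n}\big(\mathcal{B}(R_n^{(k)})\cap\mathds{R}^2_{(n)}\big)\subseteq\mathcal{F}^{(k)}$. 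Because the site marks $\{\mathcal{Z}_{a,b}\}$ are i.i.d.\ and the event ``$\mathcal{S}_{a,b}$ blocks a tier-$k$ link'' (i.e.\ $\mathcal{Z}_{a,b}\in\{h^{(1)},\dots,h^{(k)}\}$) depends on $\mathcal{Z}_{a,b}$ alone, the restrictions $\{(\bigcup_{\ell=1}^k\Sigma^{(\ell)})\cap\mathds{R}^2_{(n)}\}_n$ are mutually independent; together with the independence of $\Pi^{(k)}$ across the disjoint regions, this yields, exactly as in \eqref{pc_8regions}, $p_c^{(k)}\ge 1-\prod_{n=1}^{8}\big(1-\tilde{p}_c^{(k,n)}\big)$ with $\tilde{p}_c^{(k,n)}={\rm Pr}\{\mathcal{B}(\min(R_n^{(k)},r^{(k)}_b))\cap\mathds{R}^2_{(n)}\cap\Pi^{(k)}\neq\emptyset\}$.

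It remains to lower bound each $\tilde{p}_c^{(k,n)}$, which I would do by repeating the computation behind Lemma~\ref{lem:pc_8regions} almost verbatim: the law of $R_n^{(k)}$ follows from counting the sites of $\mathds{R}^2_{(n)}$ met by $\mathcal{B}(r)$ --- a count growing like $\lceil r/\sqrt{s}-\tfrac12\rceil$ for $n\in\{1,3,5,7\}$ and like $\big(\lceil r/\sqrt{s}-\tfrac12\rceil\big)^2$ for $n\in\{2,4,6,8\}$ --- the Poisson void probabilities reuse the area estimates $s\ell$ and $\tfrac14\pi s\ell^2$ for $\mathcal{B}(r_\ell)\cap\mathds{R}^2_{(n)}$ recorded after Lemma~\ref{lem:pc_8regions}, and the only new ingredient is that the site-void probability $\bar{p}_b$ is replaced throughout by $q_k=\prod_{\ell=1}^{k}\bar{p}_b^{(\ell)}$ (hence the occupancy probability $p_b$ by $\bar{q}_k=1-q_k$), since a site blocks a tier-$k$ signal precisely when it carries a building of one of the heights $h^{(1)},\dots,h^{(k)}$ --- this is the substitution already justified in Corollary~\ref{cor:CPi_arb} and Section~\ref{system_model}. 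Carrying these substitutions through \eqref{eqn:multi-q-n-1}--\eqref{eqn:multi-q-n-2} produces exactly the stated $\eta^{(k,n)}$, and plugging back and taking $\max_k$ completes the proof.

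The bookkeeping --- the two site-count asymptotics and the quarter-disk/strip area bounds --- is mechanical and identical to the single-tier case, so I do not expect trouble there. The one step that genuinely needs care is the independence claim underlying the product over $n$: one must verify that passing from $\Sigma$ to the nested union $\bigcup_{\ell\le k}\Sigma^{(\ell)}$ does not destroy the inter-region independence that the partition of Fig.~\ref{multi_region} was designed to guarantee; it does not, because being blocked at tier $k$ is still decided site by site and the site marks are i.i.d. A related structural point is that the fields $\{\mathcal{F}^{(k)}\}_k$ remain correlated across tiers (the unions $\bigcup_{\ell\le k}\Sigma^{(\ell)}$ are nested), so no product over $k$ is available; this is exactly why the theorem asserts only a $\max_k$ bound, a limitation inherited directly from \eqref{eq:CPK_gene}.
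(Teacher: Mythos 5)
Your proposal is correct and follows essentially the same route as the paper: the paper obtains this theorem by combining the max-over-tiers bound \eqref{eq:CPK_gene} with the eight-region partition argument of Lemma~\ref{lem:pc_8regions} and Theorem~\ref{thm:CP1_8regions}, applied per tier with $\bar{p}_b$ replaced by $q_k$ (and $\lambda_c, r_b$ by their tier-$k$ counterparts), exactly as you describe. Your explicit verification that the inter-region independence survives the passage from $\Sigma$ to $\bigcup_{\ell\le k}\Sigma^{(\ell)}$ is a detail the paper leaves implicit, and it is correct.
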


An approximation of   $\widehat{p}_c$ can be obtained if  the spatial correlation of $\{\mathcal{F}^{(k)}\}$ is ignored. Mathematically,  the approximation of   $\widehat{p}_c$ is given as follows:
\begin{align}\label{eq:CPK_appro}
\widehat{p}_c &= 1 - {\rm Pr}\l\{\bigcup_{k=1}^{K} \left( \mathcal{F}^{(k)} \cap \mathcal{B}(r^{(k)}_b) \cap \Pi^{(k)}\right)= \emptyset\r\}\notag\\
&\approx 1-\prod_{k=1}^{K} {\rm Pr}\l\{ \mathcal{F}^{(k)} \cap \mathcal{B}(r^{(k)}_b) \cap \Pi^{(k)} = \emptyset  \r\} \nn \\
&= 1-\prod_{k=1}^{K}\Big(1-p_c^{(k)}\Big)\;.
\end{align}

\begin{remark}\label{remark:LinerlyGrowHetNets}
\emph{If the connectivity probability for each tier of the $K$-tier HetNet is small, namely $p_c^{(k)} \rightarrow 0$ for all $k=1,\cdots, K$,  it follows from \eqref{eq:CPK_appro} that  the network connectivity probability for the $K$-tier HetNet can be further approximated as $\widehat{p}_c \approx \sum_{k=1}^{K}p_c^{(k)}$. This quantifies the gain of multiple tiers of a HetNet that adding one more tier of BSs can linearly increase the connectivity probability. Moreover, as the number of BS tiers  increases or the site density reduces, a user lies in the service ranges of a growing number  of BSs, and user can choose one of these BSs to connect based on the metric of maximum receive power. More complex connection mechanisms can be applied such as applying bias factors on receive power for different tiers for the purpose of load balancing over the tiers (see e.g., \cite{Jeff2014LoadBal}). The connection mechanisms do not affect the current analysis. }
\end{remark}

Combining the result in Corollary~\ref{cor:CPi_arb} and \eqref{eq:CPK_appro} gives the following result.
\begin{theorem}[Connectivity Probability for the $K$-tier HetNet] \label{Theo:HetNets_appro_8region}
\emph{The connectivity probability for the $K$-tier HetNet can be approximately lower bounded by}
\begin{align}\label{eq:CPK_arb_all}
\widehat{p}_c > 1 - \prod_{k=1}^{K} \l( 1 -  \eta^{(k)} \r),
\end{align}
\emph{where $\eta^{(k)}$ is defined in Corollary~\ref{cor:CPi_arb}.}
\end{theorem}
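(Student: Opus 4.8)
The plan is to obtain Theorem~\ref{Theo:HetNets_appro_8region} as a one-line consequence of the product-form approximation in \eqref{eq:CPK_appro} together with the per-tier bound of Corollary~\ref{cor:CPi_arb}. The starting point is that \eqref{eq:CPK_appro} already establishes, after discarding the spatial correlation among the blockage-free regions $\{\mathcal{F}^{(k)}\}$ and using the mutual independence of the BS point processes $\{\Pi^{(k)}\}$, that the $K$-tier connectivity probability in \eqref{eqn:K-HetNet-coverprob} satisfies $\widehat{p}_c \approx 1-\prod_{k=1}^{K}\big(1-p_c^{(k)}\big)$, where $p_c^{(k)}$ denotes the $k$th-tier connectivity probability. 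So nothing new has to be set up; the whole content is a substitution.

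Next I would plug in the lower bound of Corollary~\ref{cor:CPi_arb}, which gives $p_c^{(k)}\ge \eta^{(k)}$ for every $k\in\{1,\dots,K\}$, with $\eta^{(k)}$ expressed through $q_k=\prod_{\ell=1}^{k}\bar{p}_b^{(\ell)}$. Since each factor $1-p_c^{(k)}$ and each $1-\eta^{(k)}$ lies in $[0,1]$ and the product map $(x_1,\dots,x_K)\mapsto\prod_k x_k$ is coordinatewise nondecreasing on $[0,1]^K$, the inequalities $1-p_c^{(k)}\le 1-\eta^{(k)}$ yield $\prod_{k=1}^{K}\big(1-p_c^{(k)}\big)\le\prod_{k=1}^{K}\big(1-\eta^{(k)}\big)$, hence $1-\prod_k(1-p_c^{(k)})\ge 1-\prod_k(1-\eta^{(k)})$; this is strict as soon as one of the per-tier bounds of Corollary~\ref{cor:CPi_arb} is strict, which is the generic situation. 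Combining with the displayed approximation of the previous paragraph gives $\widehat{p}_c > 1-\prod_{k=1}^{K}(1-\eta^{(k)})$, which is exactly the claimed bound.

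The only step that deserves comment — and hence the part I would discuss rather than ``prove'' — is the passage from the exact $\widehat{p}_c$ in \eqref{eqn:K-HetNet-coverprob} to the product form, i.e., the harmlessness of ignoring the correlation of $\{\mathcal{F}^{(k)}\}$. The structural reason is that $\cup_{\ell=1}^{k}\Sigma^{(\ell)}$ is nondecreasing in $k$, so the regions are nested, $\mathcal{F}^{(1)}\supseteq\cdots\supseteq\mathcal{F}^{(K)}$: conditionally on the lattice $\tilde{\Phi}$ each per-tier ``no-connection'' event has probability $\exp\!\big(-\lambda_c^{(k)}|\mathcal{F}^{(k)}\cap\mathcal{B}(r^{(k)}_b)|\big)$, a nondecreasing functional of the (i.i.d.) site occupancies, and conditionally the tiers are independent; applying an FKG/Harris inequality to this product of monotone functionals shows that the true probability that all tiers fail is at least $\prod_k(1-p_c^{(k)})$. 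Thus the product form controls $\widehat{p}_c$ from one side, the error of the approximation has a definite sign, and the stated inequality is accurate in the weak-coupling regime (few tiers, or one building height dominating), while the fully rigorous companion statement remains $\widehat{p}_c\ge\max_k\eta^{(k)}$ of Theorem~\ref{Theo:HetNets_arbitS_1region}. Beyond this remark there is no obstacle: the remaining manipulations are just the elementary monotonicity argument above, so the proof is short.
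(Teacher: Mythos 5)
Your proposal is correct and follows essentially the same route as the paper, whose entire argument is the one sentence ``combining Corollary~\ref{cor:CPi_arb} and \eqref{eq:CPK_appro}'': substitute the per-tier lower bounds $p_c^{(k)}\ge\eta^{(k)}$ into the independence approximation $\widehat{p}_c\approx 1-\prod_{k}(1-p_c^{(k)})$ and use monotonicity of the product on $[0,1]^K$. Your FKG/Harris addendum goes beyond the paper and is correct: since each conditional failure probability $\exp(-\lambda_c^{(k)}\,|\mathcal{F}^{(k)}\cap\mathcal{B}(r_b^{(k)})|)$ is a nondecreasing functional of the i.i.d.\ site heights and the tiers are conditionally independent, one gets ${\rm Pr}\{\text{all tiers fail}\}\ge\prod_k(1-p_c^{(k)})$, i.e.\ the product form \emph{over}-estimates $\widehat{p}_c$ — so the displayed strict inequality cannot be made rigorous by this substitution alone (the two errors pull in opposite directions), which is precisely why the paper, like you, can only call it an ``approximate'' lower bound and retains Theorem~\ref{Theo:HetNets_arbitS_1region} as the rigorous fallback.
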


The lower bounder in Theorem~\ref{Theo:HetNets_appro_8region} can be further  tightened using the same approach as in the preceding section, giving
\begin{align}
\widehat{p}_{c} > 1 - \prod_{k=1}^{K}\l( \prod_{n=1}^{8} \l(1 - \eta^{(k,n)} \r)  \r),
\end{align}
where $\{\eta^{(k,n)}\}$ are given in Theorem~\ref{thm:CP1_8regions_hetnet}.

\subsection{Extension to Channel Model with Both LoS and NLoS Paths for the $K$-tier HetNet}\label{sec:fadingHetnet}
The extension of the analysis in Section \ref{sec:fadingSing} to the case of the $K$-tier HetNet is straightforward.  A procedure similar to that for deriving the connectivity probability in \eqref{pc_twoterms} can be applied to obtain the connectivity probability for the $k$th tier network, denoted by $p_c^{(k)}$ and expressed as
\begin{align}\label{pc_twotermsK}
p_c^{(k)} &= 1 - \l(1 - p_c^{(k)}(\text{LoS}) \r)  \l(1 - p_c^{(k)}(\text{NLoS}) \r),
\end{align}
where $p_c^{(k)}(\text{LoS})$ ($p_c^{(k)}(\text{NLoS})$) denotes the connectivity probability with only LoS (NLoS) paths in the $k$th tier network. The expression of $p_c^{(k)}(\text{LoS})$ is given in Corollary~\ref{cor:CPi_arb} and the expression of  $p_c^{(k)}(\text{NLoS})$ is given by
\begin{align}
p_c^{(k)}(\text{NLoS})
&= {\rm Pr}\l\{\max_{Y \in \Pi^{(k)}} P^{(k)} G_Y |Y|^{-\alpha^{(k)}_{\text{nlos}}} \geq \theta^{(k)} \r\},\label{Eq:PcK:NLoS}
\end{align}
where $P^{(k)}$, $\alpha^{(k)}_{\text{nlos}}$, and $\theta^{(k)}$ are the transmit power of BS, NLoS path-loss exponent, and SNR threshold for the $k$th tier network.  Finally, the connectivity probability for the $K$-tier HetNet is lower bounded by $p_c  \geq \max_{k \in \{1,\cdots, K\}} p_c^{(k)}$. The details are ignored due to the page limitation.

\section{Simulation Results}\label{simulation}
In this section, both the Monte Carlo simulation and analytical results for characterizing the network connectivity based on the proposed random lattice building modeling are presented and compared to illustrate how the blockage effect impacts the network performance in an urban scenario. The single-tier network is considered first and followed by the $K$-tier HetNet. The results are obtained based on the following parameter setting unless stated otherwise. In the single-tier network, assume the coverage range of a BS is $r_b=150$ meters (m). The BS density is $\lambda_c = 6\times 10^{-6}$ per square meter ($1/\mathrm{m}^{2}$) and the site area is $s = 300~ \mathrm{m}^{2}$. The building occupancy probability  is $p_b = 0.3$. For the $K$-tier HetNet, we assume the number of tiers $K = 3$. The corresponding BS coverage ranges of tiers 1, 2, 3 are specified as $\{r^{(1)}_b, r^{(2)}_b, r^{(3)}_b\} = \{150, 90, 50\}$ m, respectively. The density of BSs in each tier increases, given by $\lambda^{(1)}_c = 4\times 10^{-5} ~1/\mathrm{m}^{2}$, $\lambda^{(2)}_c = 5\lambda_c^{(1)}$, and $\lambda^{(3)}_c = 10\lambda^{(1)}_c$, and the building occupancy probabilities are listed as $\{p^{(0)}_{b}, p^{(1)}_{b}, p^{(2)}_{b}, p^{(3)}_{b}\} = \{0.4, 0.1, 0.2, 0.3\}$.

Both simulation and analytical results are plotted and compared based on the proposed blockage-free region partition approaches, namely MBFC region approach and \emph{multiple sub-regions} (multi-region) partition approach. In each figure, we consider the following two setups for numerically plotting the analytical curves: \emph{i}) the analytical result with a finite site density (i.e., finite site area), \emph{ii}) the analytical result with a high site density (i.e., small site area). For the markers in the  figures, the Monte Carlo simulation results are represented using the circles: red circles stand for the simulation results of the real network scenario and black circles stand for the simulation results of the lower bounds, while the analytical results of two aforementioned approaches are plotted via the short-dashed line and solid line, respectively.

\begin{figure}[t]
\centering
\includegraphics[width=8cm]{./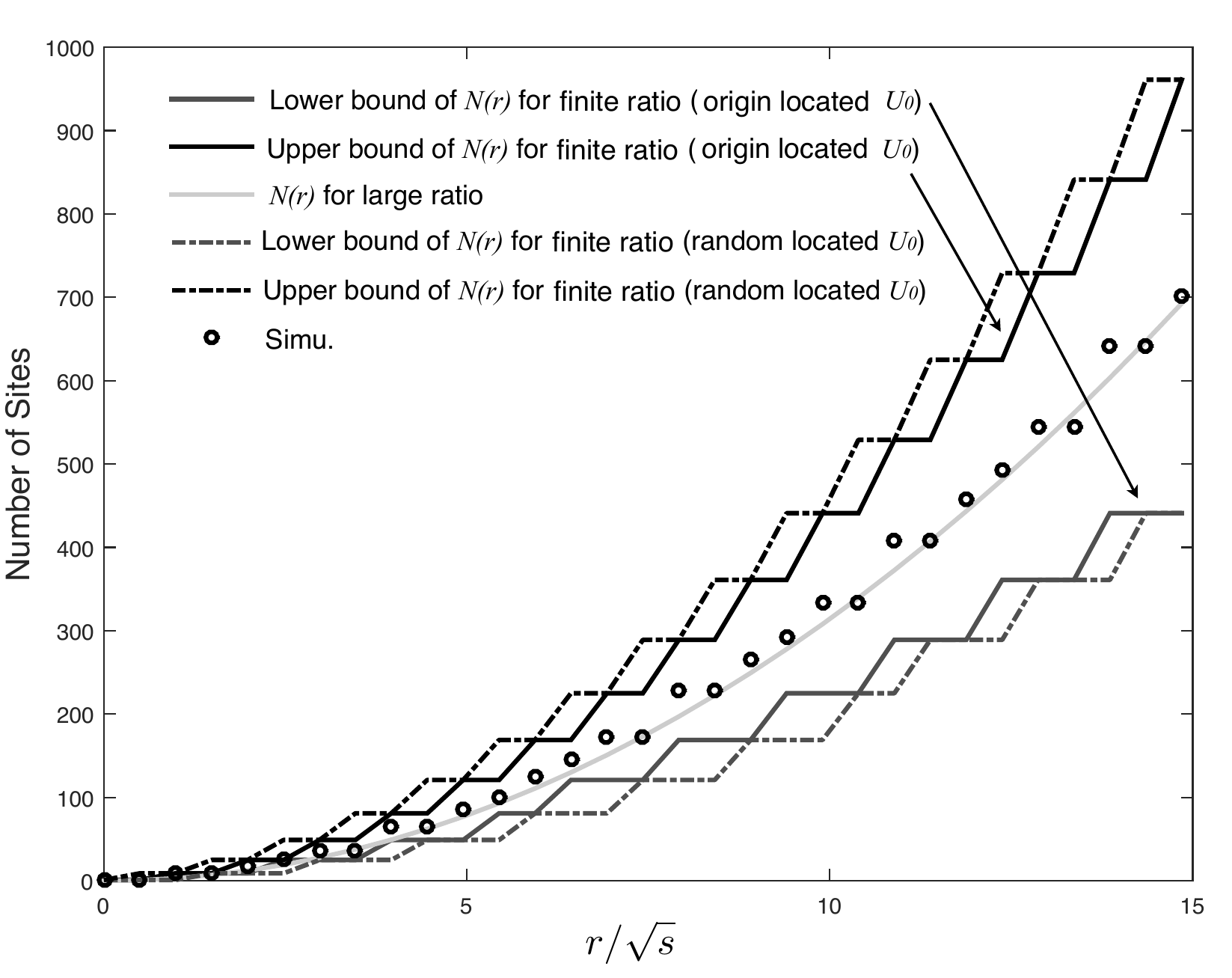}
\caption{Number of sites (fully and partially) covered by a disk with radius $r$ versus the ratio $\frac{r}{\sqrt{s}}$.}\label{sim:numsite}
\end{figure}

First, the result in Lemma~\ref{lem:num_site} is validated in Fig.~\ref{sim:numsite} displaying the curves of the number of (fully and partially) covered sites within the distance $r$ from the typical user at the origin, namely the exact $N(r)$,  its upper bound $N^{+}(r)$, lower bound $N^{-}(r)$ and approximation for large $\frac{r}{\sqrt{s}}$ given in (\ref{num_site_lr}). On one hand,  when $\frac{r}{\sqrt{s}}$ is small, the bounds and the approximation are accurate. On the other hand, when  $\frac{r}{\sqrt{s}}$ is large, the bounds are not tight but follow approximately the same scaling laws as the exact result, and the approximation  for large $\frac{r}{\sqrt{s}}$  is accurate. Combining the above two cases, the approximation of $N(r)$ appears to be tight throughout the considered range of $\frac{r}{\sqrt{s}}$. In addition, one can see that the the location of the typical user has a negligible effect on the bounds on $N(r)$ and the approximation of $N(r)$ for a large ratio $\frac{r}{\sqrt{s}}$.

\begin{figure}[t]
\centering
\includegraphics[width=8cm]{./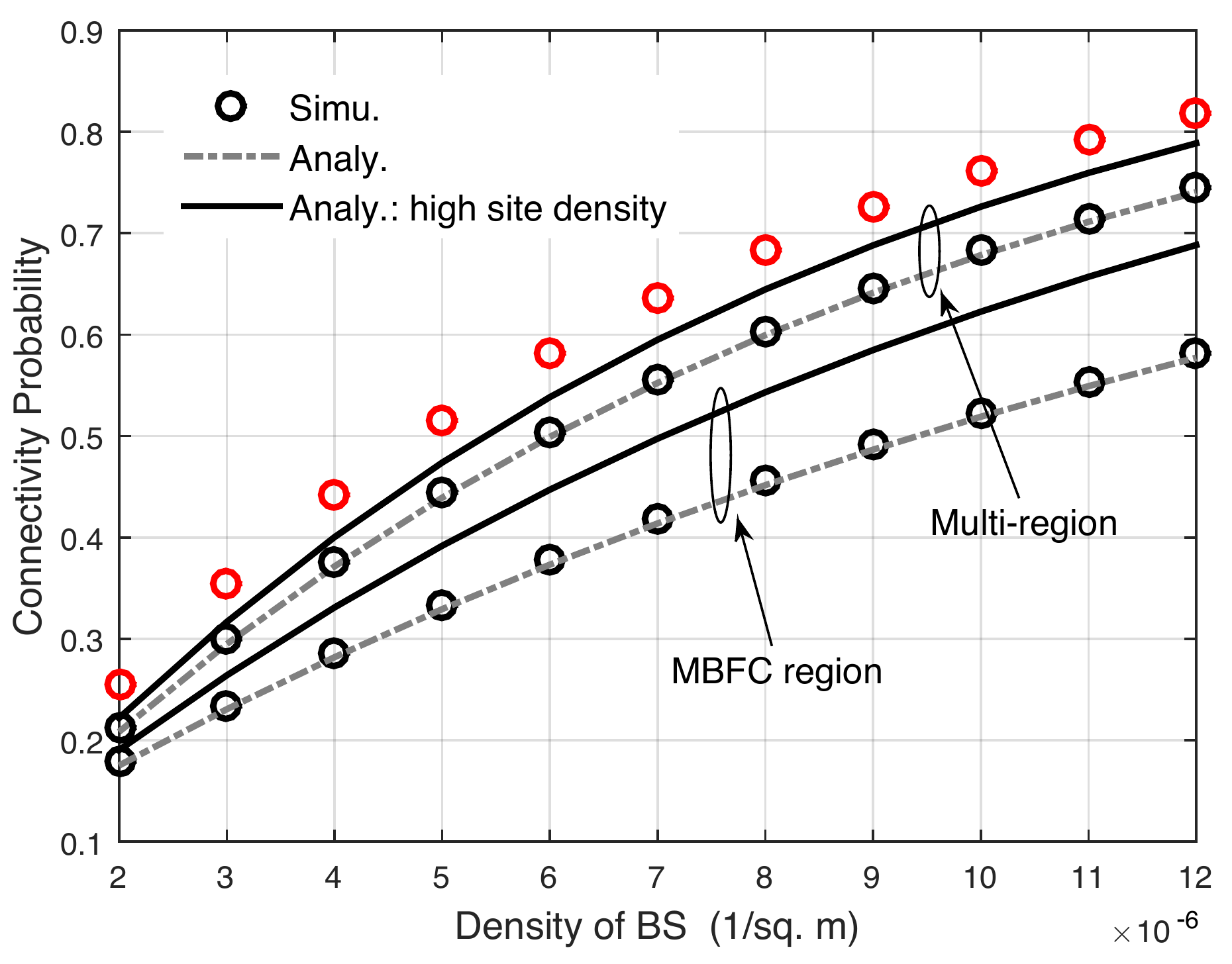}
\caption{Connectivity probability versus BS density for different blockage-free region partition approaches, MBFC region and multi-region, in the single-tier network.}\label{sim:cp_lambda_bs_1tier}
\end{figure}

In Fig.~\ref{sim:cp_lambda_bs_1tier}, the connectivity probability of the single-tier network is plotted versus the BS density for different region partition approaches. Firstly, the analytical results calculated using Theorem~\ref{thm:CP1} match the Monte Carlo simulation results (black circles) closely. It can be observed from the figure that the connectivity probability $p_c$ increases as the BS density $\lambda_c$ increases, namely, the network connectivity will benefit from a denser BS deployment since deploying more BSs leads to a larger service coverage and thus fewer coverage holes. Another observation is that the analytical result for the network with a high site density ($\lambda_s \rightarrow \infty$) given in Theorem~\ref{thm:CP1_cont} provides a tight lower bound compared with that of the network with a finite site density. The reason is that, when the site density is high, the site counting result given in (\ref{num_site_lr}) is more accurate. Moreover, the bounds have been tightened by partitioning the network into eight independent regions, which gives a small gap with respect to the simulation result of the real scenario marked by red circles. The gap seems to be an acceptable compromise between accuracy of characterizing the connectivity probability and analytical tractability.

\begin{figure}[t]
\centering
\includegraphics[width=8cm]{./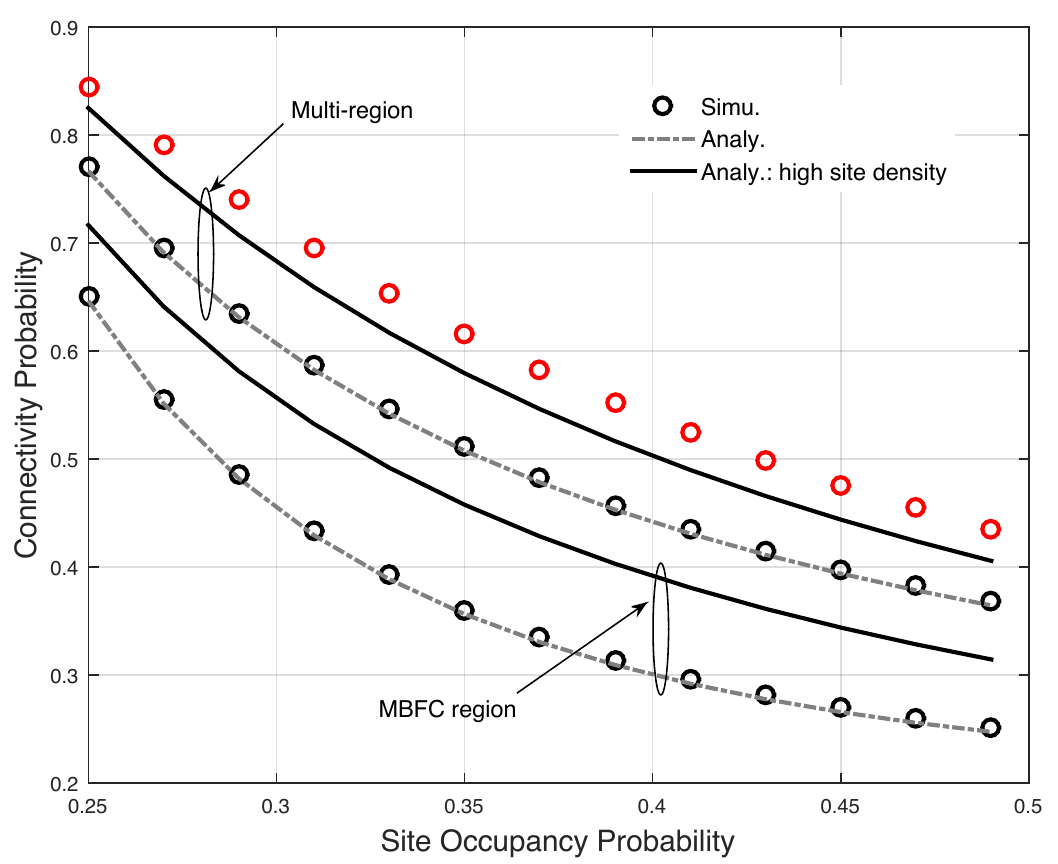}
\caption{Connectivity probability versus site occupancy probability for different blockage-free region partition approaches, MBFC region and multi-region, in the single-tier network.}\label{sim:cp_pb_1tier}
\end{figure}

\begin{figure}[t]
\centering
\includegraphics[width=8cm]{./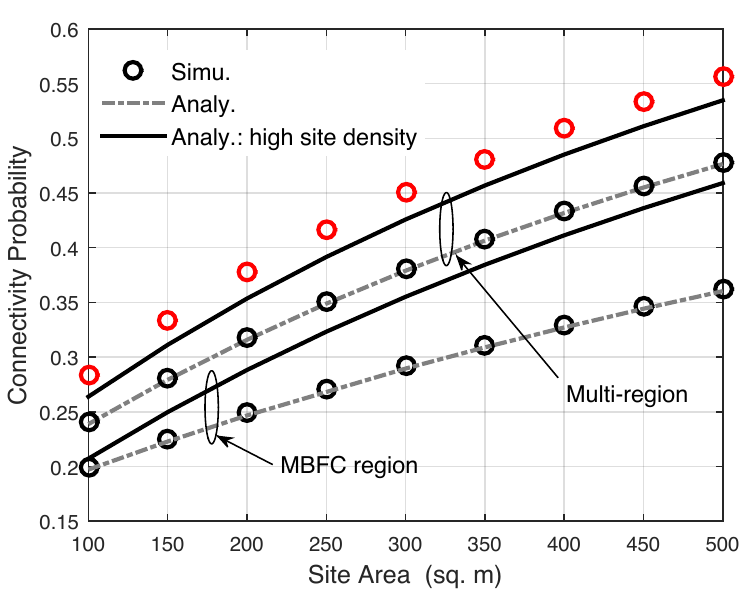}
\caption{Connectivity probability versus site area for different blockage-free region partition approaches, MBFC region and multi-region, in the single-tier network.}\label{sim:cp_s_1tier}
\end{figure}

Fig.~\ref{sim:cp_pb_1tier} shows the curves of $p_c$ in the single-tier network versus the site occupancy probability $p_b$. It is found that increasing the site occupancy probability weakens the network connectivity, and  $p_c$ degrades significantly when $p_b$ is relatively small ($0.25 - 0.35$) and converges to some value if $p_b$ becomes quite large. This is because densifying the buildings intercepts more LoS links of mmWave signals and enlarges the communication coverage holes, reducing connectivity probability.

\begin{figure}[t!]
\centering
\subfigure[Effect of BS density]{\includegraphics[width=7.5cm]{./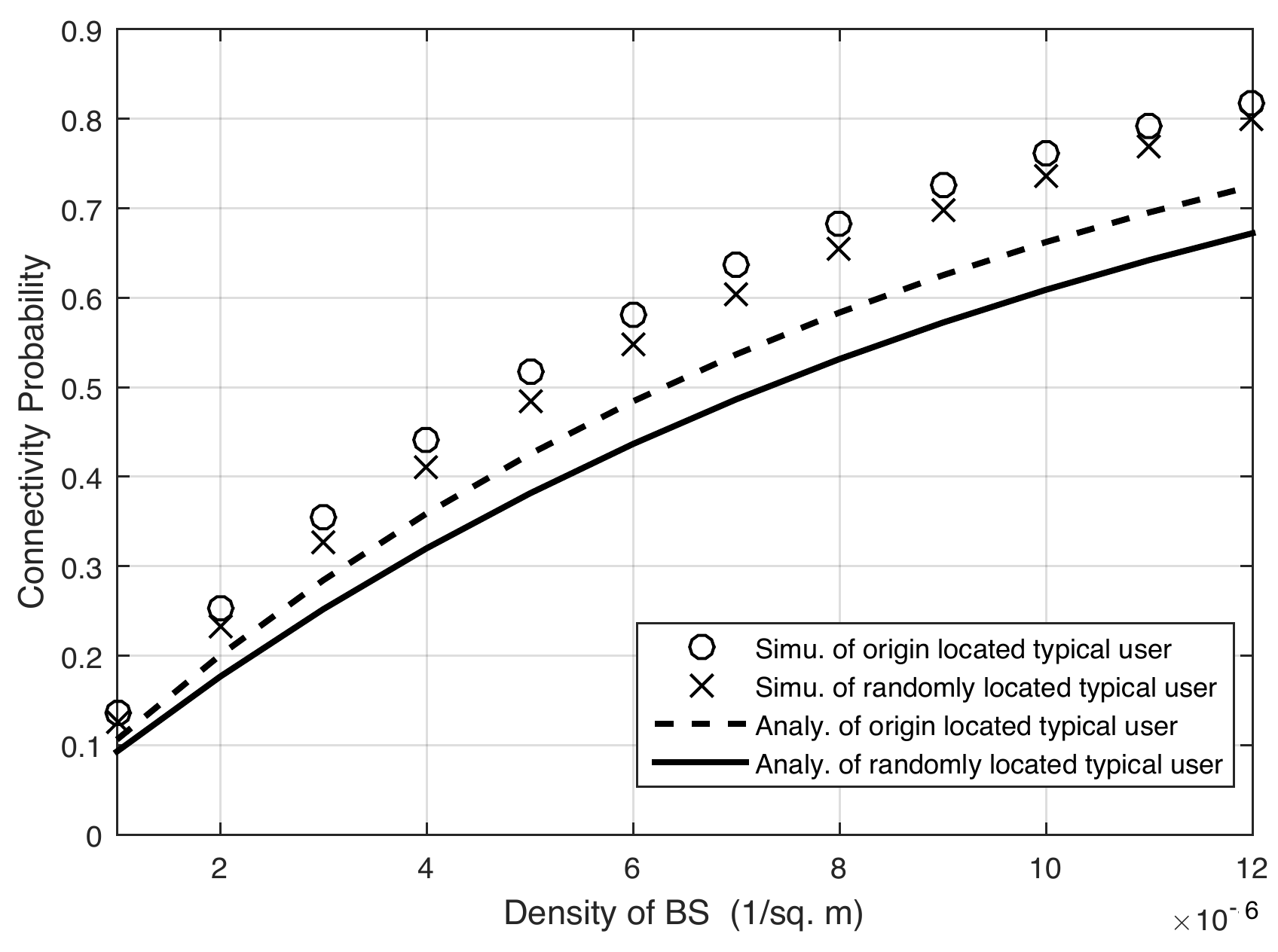}}
\subfigure[Effect of  site occupancy probability]{\includegraphics[width=7.5cm]{./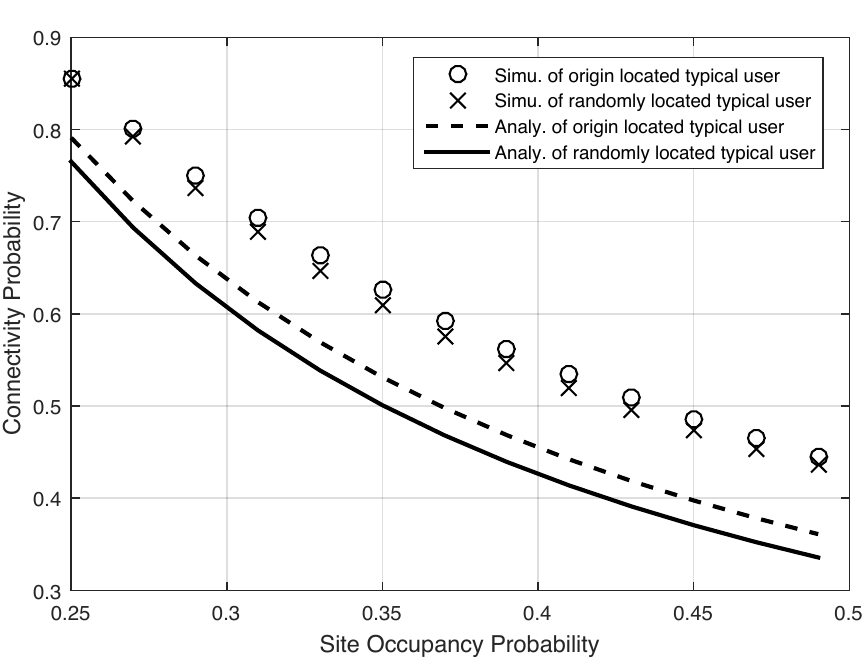}}
\caption{Connectivity probability comparison between the origin and randomly located typical user.}
\label{fig_randomUser}
\end{figure}

Next, the effect of the site density, i.e., site area $s$, on the single-tier network connectivity is investigated in Fig.~\ref{sim:cp_s_1tier}. Obviously, increasing the site area $s$, i.e., decreasing the site density $\lambda_s = \frac{1}{s}$, will dramatically improve the network connectivity. It is easy to understand due to the fact that, given the site occupancy probability $p_b$, a smaller site density yields fewer opportunities for buildings occupying the sites nearby the typical user, which is equivalent to enlarge the blockage-free region for BSs to be located thus provide better connectivity. Another observation is that the analytical curves for small $s$ are closely matched with the analytical curves as well as simulation results when the site area goes to sufficiently small, which verifies the correctness of our analytical results under the high site density assumption ($\lambda_s \rightarrow \infty$).

\begin{figure}[t]
\centering
\includegraphics[width=8cm]{./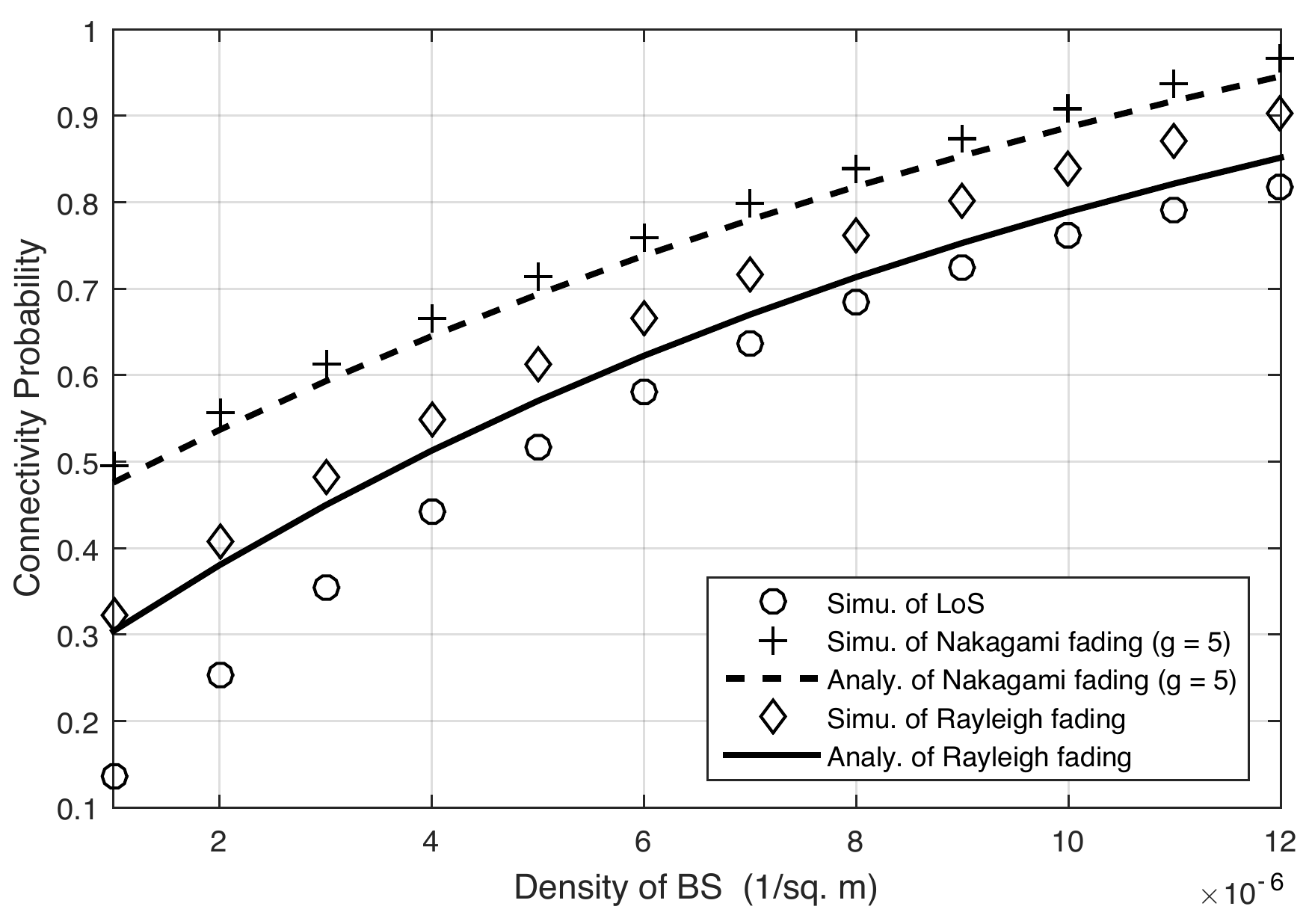}
\caption{The effect of fading in the non-LoS channel components on the coverage probability for the single-tier network.}\label{sim:fading}
\end{figure}

In Fig.~\ref{fig_randomUser}, the network performance in terms of connectivity probability is compared between the two cases of a typical user at the origin and randomly located. Both simulation and analytical results are shown. It can be observed that the random location of the typical user has a negligible effect on the connectivity probability since the corresponding simulation results closely match those for the case of the origin located user. The effects of fading  in the non-LoS channel components on connectivity probability is shown in Fig.~\ref{sim:fading}, including both simulation as well as analytical results. The small-scale fading is modelled as a Nakagami random variable with parameter $g$, i.e., $G_Y \sim \Gamma(g, 1/g)$ and other parameters are fixed as: $\theta = 5$ dB, $P = 25$ dBm, and $\alpha_{\text{nlos}} = 4$. Based on the analytical results in \eqref{PC_sing_rayfading}  and \eqref{PC_sing_nakagamifading} in Section~\ref{sec:fadingSing} and the lower bound of $p_c(\text{LoS})$ in Theorem~\ref{thm:CP1}, we can see that the simulation results can be well lower bounded and approximated by the analytical results given in the Rayleigh fading case and the Nakagami fading case, respectively. Moreover, it is observed that exploiting the non-LoS components can improve  the coverage probability. The connectivity probability reduces when $g$ decreases  (e.g., Rayleigh fading) since  the LoS  signals are weaker and more non-LoS paths exist.

\begin{figure}[h]
\centering
\includegraphics[width=8cm]{./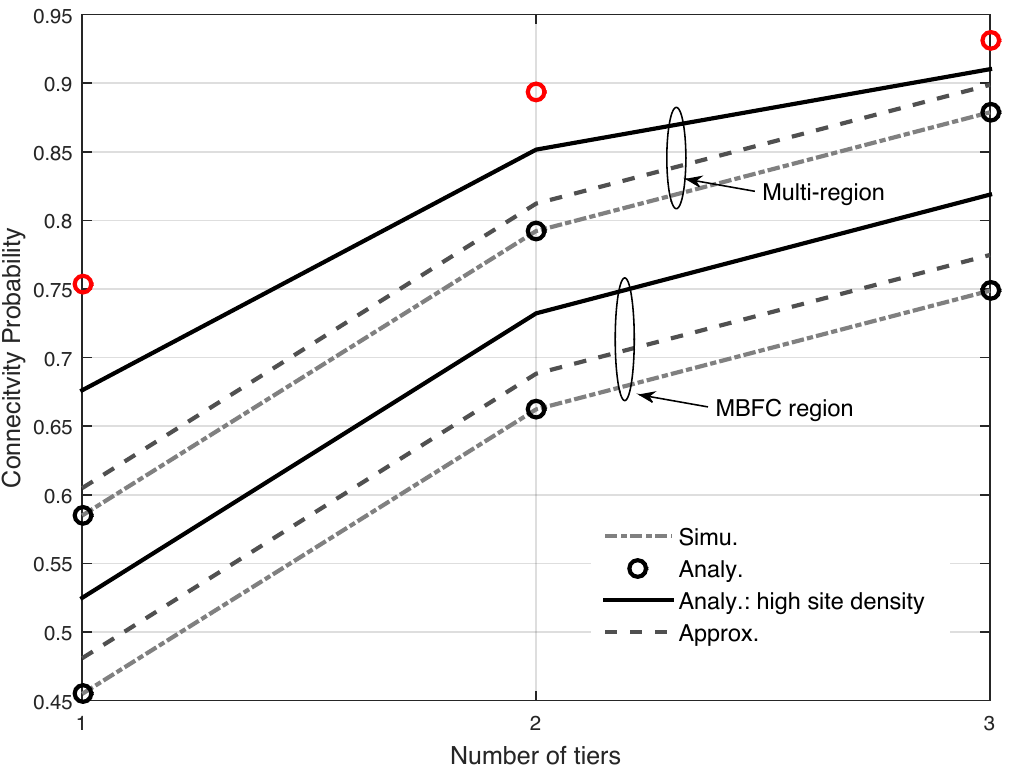}
\caption{Connectivity probability versus number of tiers for different blockage-free region partition approaches, MBFC region and multi-region, in a $K$-tier HetNet.}\label{sim:cp_k_ktier}
\end{figure}

\begin{figure}[t]
\centering
\includegraphics[width=8cm]{./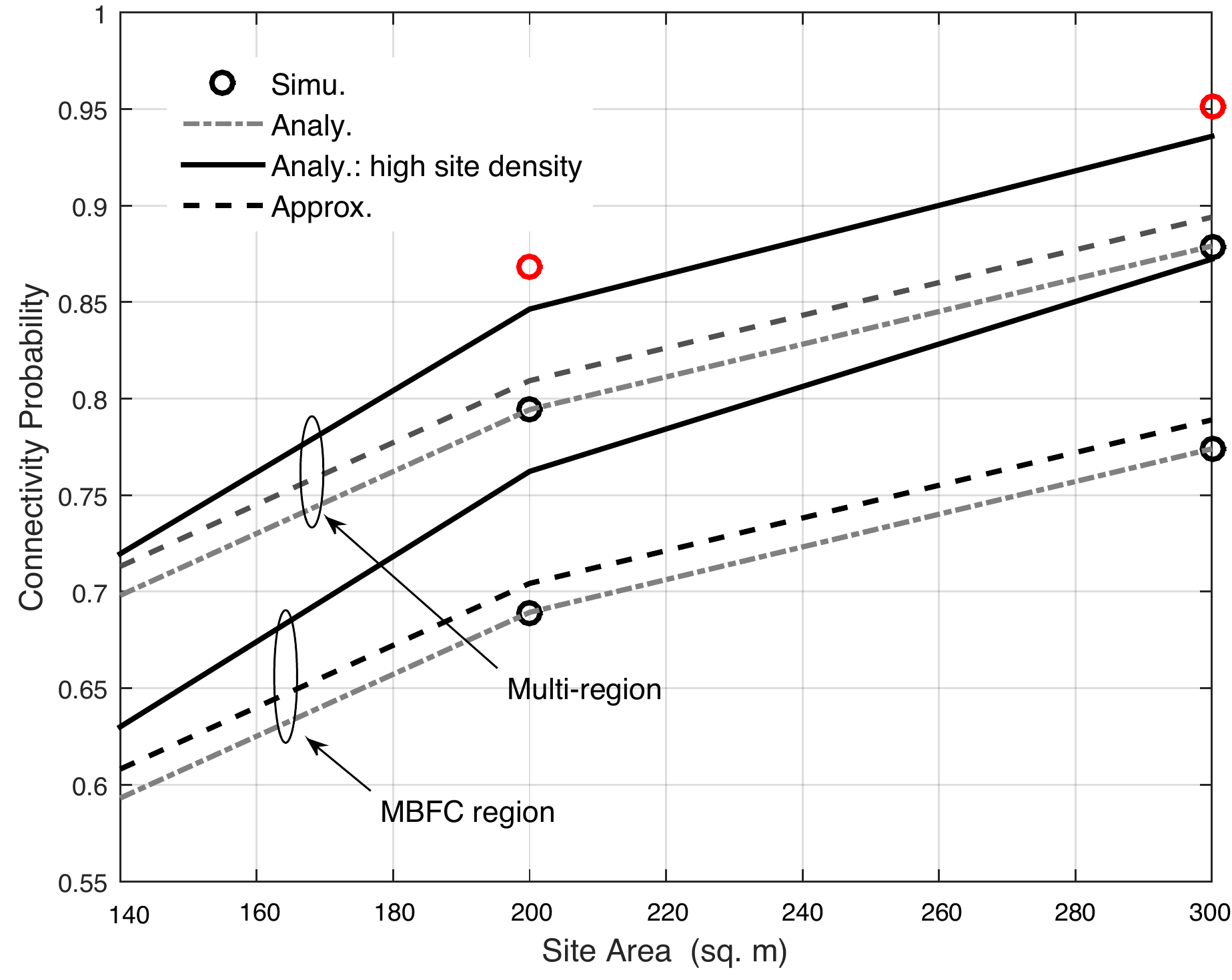}
\caption{Connectivity probability versus site area for different blockage-free region partition approaches, MBFC region and multi-region, in a $K$-tier HetNet where $K=3$.}\label{sim:cp_s_ktier}
\end{figure}

Consider the connectivity probability in the $K$-tier HetNet where $K=3$. The curves of connectivity probability $\widehat{p}_c$ are depicted against the number of tiers in Fig.~\ref{sim:cp_k_ktier}. The approximation of connectivity probability by neglecting correlations among multiple tiers is also plotted for comparison, marked as long-dashed line. As stated in the figure, $\widehat{p}_c$ is observed to grow approximately linearly with the number of tiers and saturate when $K$ is large. The bounds are remarkably tightened by partitioning the network into multi-region. Fig.~\ref{sim:cp_s_ktier} displays the curves of the network connectivity versus the site area, or site density, in the 3-tier HetNet. The connectivity probability grows when the site area $s$ increases. Moreover, the analytical results based on the high site density assumption asymptotically approach to the simulation results.


\section{Conclusion Remarks}\label{conclusion}

In this paper, we make the first attempt to investigate the blockage effect on the connectivity of mmWave networks in a Manhattan-type urban region by modeling buildings and BSs as a random lattice and a PPP, respectively. Applying the random lattice and stochastic geometry theories, different lower bounds on the connectivity probability are derived as functions of  buildings' size and the probability of a lattice cell being occupied by a building as well as BS density and transmission range. In addition, lower bounds on the asymptotic connectivity probability are also derived for cases of dense buildings. Moreover, the analysis is generalized to investigate the effect of non-LoS paths on network coverage. Last, the results are extended to HetNet. The analytical results  reveal key impacts of the building parameters and BS parameters on the connectivity of mmWave networks and provide useful guidelines for practical mmWave network deployment and performance evaluation, such as the choices for BS parameters (coverage range and density) to guarantee the connectivity of mmWave networks under given building parameters (site size, density and occupancy probability) and estimating the connectivity of mmWave networks given BS and building parameters.

This work opens up several directions for future research. In particular, taking the effects of interference and multiple antennas  into account poses interesting research opportunities. The heterogeneity in blockage objects (e.g., buildings, news stands, and billboards) can be accounted by modifying the current lattice model for buildings to be one superimposing multiple random lattices with different densities and extending the current analytical approaches.

\appendix

\subsection{Proof of Lemma \ref{lem:area_dist}}\label{proof:area_dist}
Characterizing the PMF of the random variable $R$ is equivalent to derive the area distribution of MBFC region. Denote the area measure by $\mathcal{A}(\cdot)$. Note that $\mathcal{A}(\mathcal{B}(R))$ represents the area of $\mathcal{B}(R)$, we have the following equivalent relation: ${\rm Pr}\left(R=r_n \right)={\rm Pr}\left(\mathcal{A}(\mathcal{B}(R))= \pi r_n^2 \right)$. For clear explanation, let $\mathcal{E}(r)$ be the event that there is no occupied site within $\mathcal{B}(r)$ and let $\mathcal{\bar{E}}(r)$ denote the complement event. Then, using the total probability rule gives the following result,
\begin{align}
&{\rm Pr}\left(\mathcal{A}(\mathcal{B}(R)) = \pi r_n^2 \right)\notag\\
=&{\rm Pr}\left(\mathcal{E}(r_n),\mathcal{\bar{E}}(r_{n+1})|\mathcal{E}(0)\right)\notag\\
=&{\rm Pr}\left(\mathcal{E}(r_n)|\mathcal{E}(0)\right)-{\rm Pr}\left(\mathcal{E}(r_{n+1})|\mathcal{E}(0)\right)\notag\\
=&{\rm Pr}\left(\mathcal{E}(r_n),\mathcal{E}(0)\right)-{\rm Pr}\left(\mathcal{E}(r_{n+1}),\mathcal{E}(0)\right),
\end{align}
where the last step is obtained due to the condition that ${\rm Pr}\left(\mathcal{E}(0)\right)=1$. Combining the facts that the probability that each site is not occupied is $\bar{p}_{b}$ and the occupancy of each site is independent gives the following result,
\begin{align}\label{pmf_r}
{\rm Pr}\left(R=r_n \right)=\bar{p}_{b}^{N(r_n)-1}-\bar{p}_{b}^{N(r_{n+1})-1}, n = 0, 1, \cdots. \nn
\end{align}
It is easy to verify that the number of sites within the MBFC region $\mathcal{B}(r_n)$ is $(2(n+\frac{1}{2}))^2$. Substituting $N(r_n) = ( 2(n+\frac{1}{2}) )^2$ into the above equation yields the final result.

\bibliographystyle{ieeetr}

\begin{thebibliography}{10}

\bibitem{rappaport2013millimeter}
T.~Rappaport, S.~Sun, R.~Mayzus, H.~Zhao, Y.~Azar, K.~Wang, G.~Wong, J.~Schulz,
  M.~Samimi, and F.~Gutierrez, ``Millimeter wave mobile communications for 5G
  cellular: It will work!,'' {\em IEEE Access}, vol.~1, pp.~335--349, 2013.

\bibitem{andrews2014will}
J.~Andrews, S.~Buzzi, W.~Choi, S.~Hanly, A.~Lozano, A.~Soong, and J.~Zhang,
  ``What will 5G be?,'' {\em IEEE J. of Sel. Areas in Commun.}, vol.~32,
  pp.~1065--1082, Jun. 2014.

\bibitem{anderson2004building}
C.~Anderson and T.~Rappaport, ``In-building wideband partition loss
  measurements at 2.5 and 60 {GH}z,'' {\em IEEE Trans. Wireless Commun.},
  vol.~3, pp.~922--928, May. 2004.

\bibitem{alejos2008measurement}
A.~Alejos, M.~S{\'a}nchez, and I.~Cui{\~n}as, ``Measurement and analysis of
  propagation mechanisms at 40 {GH}z: Viability of site shielding forced by
  obstacles,'' {\em IEEE Trans. Veh. Technol.}, vol.~57, pp.~3369--3380,
  2008.


\bibitem{goldsmith2005wireless}
A.~Goldsmith, {\em Wireless Communications}.
\newblock Cambridge University Press, 2005.

\bibitem{Andrews:2016aa}
J.~Andrews, T.~Bai, M.~Kulkarni, A.~Alkhateeb, A.~Gupta, and R.~Heath, Jr,
  ``Modeling and analyzing millimeter wave cellular systems,'' {\em IEEE Trans. Commun.}, vol.~65,
  pp.~403--430, Jan. 2017.

\bibitem{schaubach1992ray}
K.~Schaubach, N.~Davis~IV, and T.~Rappaport, ``A ray tracing method for
  predicting path loss and delay spread in microcellular environments,'' in
  {\em Proc. of IEEE VTC}, pp.~932--935, May. 1992.

\bibitem{rizk1997two}
K.~Rizk, J.~Wagen, and F.~Gardiol, ``Two-dimensional ray-tracing modeling for
  propagation prediction in microcellular environments,'' {\em IEEE Trans. Veh.
  Technol.}, vol.~46, pp.~508--518, May 1997.

\bibitem{3gpptr}
G.~T. 36.873, ``Echnical specification group radio access network; study on 3d
  channel model for lte (release 12),'' Sep. 2014.

\bibitem{akdeniz2014millimeter}
M.~Akdeniz, Y.~Liu, M.~Samimi, S.~Sun, S.~Rangan, T.~Rappaport, and E.~Erkip,
  ``Millimeter wave channel modeling and cellular capacity evaluation,'' {\em
  IEEE J. Sel. Areas Commun.}, vol.~32, pp.~1164--1179, Jun. 2014.

\bibitem{sun2015path}
S.~Sun, T.~Thomas, T.~Rappaport, H.~Nguyen, I.~Kovacs, and I.~Rodrigue,
  ``Path loss, shadow fading, and line-of-sight probability models for 5G urban
  macro-cellular scenariosc,'' {\em Proc. IEEE GLOBECOM Workshops}, Dec. 2015.

\bibitem{5gchannel}
``5G channel model for bands up to 100 GHz (white paper),'' in {\em Available:
  http://www.5gworkshops.com/5GCM.html}, Dec. 2015.

\bibitem{bai2014analysis}
T.~Bai, R.~Vaze, and R.~Heath, Jr, ``Analysis of blockage effects on urban cellular
  networks,'' {\em IEEE Trans. Wireless Commun.}, vol.~13, pp.~5070--5083, Sep.
  2014.

\bibitem{baccelli2015correlated}
F.~Baccelli and X.~Zhang, ``A correlated shadowing model for urban wireless
  networks,'' in {\em Proc. of IEEE INFOCOM}, pp.~801--809, Apr. 2015.

\bibitem{haenggi2009stochastic_blockage}
M.~Haenggi, J.~Andrews, F.~Baccelli, O.~Dousse, and M.~Franceschetti,
  ``Stochastic geometry and random graphs for the analysis and design of
  wireless networks,'' {\em IEEE J. of Sel. Areas in Commun.}, vol.~27,
  pp.~1029--1046, Sep. 2009.

\bibitem{28andrews2011tractable}
J.~Andrews, F.~Baccelli, and R.~Ganti, ``A tractable approach to coverage and
  rate in cellular networks,'' {\em IEEE Trans. Commun.}, vol.~59,
  pp.~3122--3134, Nov. 2011.

\bibitem{GantiHaenggi:OutageClusteredMANET:2009}
R.~Ganti and M.~Haenggi, ``Interference and outage in clustered wireless ad
  hoc networks,'' {\em IEEE Trans. Info. Theory}, vol.~9, pp.~4067--4086,
  Sep. 2009.

\bibitem{KB1}
K. Huang and J. Andrews, ``An Analytical Framework for Multi-Cell Cooperation via Stochastic Geometry and Large Deviations,'' {\em IEEE Trans. Info. Theory}, vol.~59, pp.~2501--2516, Apr. 2013.


\bibitem{Gupta2017macro}
A.~Gupta, J.~Andrews, and R.~Heath, Jr, ``Macro diversity in cellular networks with
  random blockages,'' {\em (online) Available:
  https://arxiv.org/abs/1701.02044}, Jan. 2017.


\bibitem{singh2015tractable}
S.~Singh, M.~Kulkarni, A.~Ghosh, and J.~Andrews, ``Tractable model for rate in
  self-backhauled millimeter wave cellular networks,'' {\em IEEE J. of Sel.
  Areas in Commun.}, vol.~33, pp.~2196--2211, Oct. 2015.

\bibitem{franceschetti1999propagation_blockage}
G.~Franceschetti, S.~Marano, and F.~Palmieri, ``Propagation without wave
  equation toward an urban area model,'' {\em IEEE Trans. Antennas Propag.},
  vol.~47, pp.~1393--1404, Sep. 1999.

\bibitem{marano1999statistical_blockage}
S.~Marano, F.~Palmieri, and G.~Franceschetti, ``Statistical characterization of
  ray propagation in a random lattice,'' {\em J. Opt. Soc. Amer. A}, vol.~16, pp.~2459--2464,
  Oct. 1999.

\bibitem{marano2005ray_blockage}
S.~Marano, and M.~Franceschetti, ``Ray propagation in a random lattice: a
  maximum entropy, anomalous diffusion process,'' {\em IEEE Trans. Antennas
  Propag.}, vol.~53, pp.~1888--1896, Jun. 2005.

\bibitem{martin2012SGWCom}
M.~Haenggi, {\em Stochastic Geometry for Wireless Networks}.
\newblock Cambridge University Press, 2012.

\bibitem{jay2008grid}
G.~ Jayakumar and G.~Gopinath, ``Performance comparison of MANET protocols based on Manhattan grid mobility model,'' {\em J. Mobile Comm.}, vol.~2, pp.~18--26, 2008.


\bibitem{Choi2017spatial}
C.~Choi, J. Woo, and J.~Andrews, ``An analytical framework for modeling a spatially repulsive cellular network,'' {\em (online) Available:
  https://arxiv.org/abs/1701.02261}, May. 2017.

\bibitem{mao2012towards}
G.~Mao and B.~Anderson, ``Towards a better understanding of large-scale network
  models,'' {\em IEEE/ACM Trans. on Networking}, vol.~20, no.~2, pp.~408--421,
  2012.

\bibitem{xue2004number}
F.~Xue and P.~Kumar, ``The number of neighbors needed for connectivity of
  wireless networks,'' {\em Wireless Networks}, vol.~10, no.~2, pp.~169--181,
  2004.

\bibitem{gupta1999critical}
P.~Gupta and P.~Kumar, ``Critical power for asymptotic connectivity in wireless
  networks,'' in {\em Stochastic analysis, control, optimization and
  applications}, pp.~547--566, Springer, 1999.

\bibitem{gupta2000capacity}
P.~Gupta and P.~Kumar, ``The capacity of wireless networks,'' {\em IEEE Trans.
  Info. Theory}, vol.~46, pp.~388--404, Mar. 2000.

\bibitem{dousse2005impact}
O.~Dousse, F.~Baccelli, and P.~Thiran, ``Impact of interferences on
  connectivity in ad hoc networks,'' {\em IEEE/ACM Trans. on Networking},
  vol.~13, no.~2, pp.~425--436, 2005.

\bibitem{mao2013connectivity}
G.~Mao and B.~Anderson, ``Connectivity of large wireless networks under a
  general connection model,'' {\em IEEE Trans. Info. Theory}, vol.~59,
  pp.~1761--1772, Mar. 2013.

\bibitem{dasgupta2015new}
S.~Dasgupta, G.~Mao, and B.~Anderson, ``A new measure of wireless network
  connectivity,'' {\em IEEE Trans. on Mobile Computing}, vol.~14,
  pp.~1765--1779, Sep. 2015.


\bibitem{Bettstetter2002Degree}
C.~Bettstetter, ``On the minimum node degree and connectivity of a wireless multihop network,'' {\em Proc. of 3rd ACM MOBIHOC'02}, pp.~80--92, 2002.


\bibitem{Bettstetter2004adhocNet}
C.~Bettstetter, ``On the connectivity of ad hoc networks,'' {\em The Computer Journal}, vol.~47, no.~4, pp.~432--447, 2004.

\bibitem{Coon2015scalinglaw}
J.~Coon, O.~Georgiou, and C.~Dettmann, ``Connectivity scaling laws in wireless networks,'' {\em IEEE Wireless Commun. Letters}, vol.~4, no.~6, pp.~629--632, Dec. 2015.

\bibitem{Pete2016mobility}
P.~Pratt, C.~Dettmann, and O.~Georgiou, ``How does mobility affect the connectivity of interference-limited ad hoc networks?'' {\em Proc. of IEEE WiOpt}, pp.~1--8, 2016.




\bibitem{RAP20131}
T.~Rappaport, F.~Gutierrez, E.~Ben-Dor, J.~Murdock, Y.~Qiao and J.~Tamir,, ``Broadband millimeter-wave propagation measurements and models using adaptive-beam antennas for outdoor urban cellular communications,'' {\em  IEEE Trans. Antennas and Propag.}, vol.~61,
  pp.~1850--1859, Apr. 2013.




\bibitem{Geng2009mmwave}
S.~Geng, J.~Kivinen, X.~Zhao,  and P.~Vainikainen, ``Millimeter-wave propagation channel characterization for short-range wireless communications,'' {\em IEEE Trans. Veh. Tech.}, vol.~58, no.~1,  pp.~3--13, Jan. 2009.

\bibitem{akdeniz2015mmwaveChannel}
M.~Akdeniz, et al, ``Millimeter wave channel modeling and cellular capacity evaluation,'' {\em IEEE J. of Sel.
  Areas in Commun.}, vol.~32, pp.~1164--1179, Jun. 2014.

\bibitem{Macro2015mmwave}
M.~Di Renzo, ``Stochastic geometry modeling and analysis of multi-tier millimeter wave cellular networks,'' {\em IEEE Trans. Wireless Commun.}, vol.~14, pp.~5038--5057, Sep. 2015.


\bibitem{tatino2017beam}
C.~Tatino, I.~Malanchini, D.~Aziz and D.~Yuan, ``Beam based stochastic model of the coverage probability in 5G millimeter wave systems,'' {\em Proc. of IEEE 	SpaSWiN}, pp.~1--6, 2017.

\bibitem{Andrew2016multi}
A.~Thornburg, T.~Bai, and R.~Heath, Jr, ``Performance analysis of outdoor mmWave ad hoc networks,'' {\em IEEE Trans. Signal Process.}, vol.~64, pp.~4065--4079, Aug. 2016.




\bibitem{Rappaport2014bookmmwave}
T.~Rappaport, R.~Heath, Jr, R.~Daniels, and J.~Murdock,  {\em Millimeter wave wireless communications}.
\newblock Prentice Hall Press, 2014.

\bibitem{Jeff2014LoadBal}
J.~Andrews, S.~Singh, Q.~Ye, X.~Lin, and H.~Dhillon, ``An overview of load balancing in HetNets: Old myths and open problems,'' in {\em IEEE Wireless Commun.}, vol.~21, no.~2,  pp.~18--25, Apr. 2014.



\end{thebibliography}

\end{document}